\def\Q{\mathcal{Q}}
\def\S{\mathcal{S}}
\def\md{\mathbf{d}}
\def\B{\mathcal{B}}
\def\E{\mathcal{E}}
\def\e{\epsilon}
\def\d{\delta}
\def\ab{($\e$,\,$\d$)}
\def\Prob{\mathbb{P}}
\def\dx{\text{\, dx}}
\DeclareMathOperator{\diam}{diam}
\DeclareMathOperator{\dist}{dist}
\newcommand{\A}{\mathcal{A}}
\newtheorem{lemma}{Lemma}
\newtheorem{theorem}{Theorem}
\newtheorem{corollary}{Corollary}
\newtheorem{definition}{Definition}
\newtheorem{proposition}{Proposition}
\newtheorem{example}{Example}
\begin{document}

\title{Differential Privacy in Metric Spaces: Numerical, Categorical and Functional Data Under the One Roof}
\author{Naoise Holohan \thanks{Hamilton Institute, NUI Maynooth, Co. Kildare, Ireland}, Douglas Leith \thanks{Hamilton Institute, NUI Maynooth, Co. Kildare, Ireland} Oliver Mason\thanks{Hamilton Institute, NUI Maynooth, Co. Kildare, Ireland. Corresponding Author: email: oliver.mason@nuim.ie.  Supported by the HEA, PRTLI 4 Network Mathematics Grant}}

\maketitle

\begin{abstract}
We study Differential Privacy in the abstract setting of Probability on metric spaces.  Numerical, categorical and functional data can be handled in a uniform manner in this setting.  We demonstrate how mechanisms based on data sanitisation and those that  rely on adding noise to query responses fit within this framework.  We prove that once the sanitisation is differentially private, then so is the query response for any query.  We show how to construct sanitisations for high-dimensional databases using simple 1-dimensional mechanisms.  We also provide lower bounds on the expected error for differentially private sanitisations in the general metric space setting.  Finally, we consider the question of sufficient sets for differential privacy and show that for relaxed differential privacy, any algebra generating the Borel $\sigma$-algebra is a sufficient set for relaxed differential privacy.  
\end{abstract}

\textbf{Keywords:} Differential Privacy; Metric Space; Categorical Data; Functional Data; Data Sanitisation

%%%%%%%%%%%%%%%%%%%%%%%%%%%%%%%%%%%%%%%%%%%%%%%%%%%
\section{Introduction}

%%%%%%%%%%%%%%%%%%%%%%%%%%%%%%%%%%%%%%%%%%%%%%%%%%%
\subsection{Background}

The rapid expansion of the Internet and its use in everyday life, alongside the growing understanding of the potential benefits of big data \cite{bigdatause}, has pushed data privacy to the forefront of research priorities since the turn of the millennium.
Whether it be online, in the supermarket or at the hospital, corporations and governments are collecting vast quantities of data about our activities, the choices we make and the people we are in order to work more efficiently, increase profits and better serve our needs as consumers and citizens \cite{CSC}.
The challenge of making this potentially highly sensitive data publicly available where it can be put to good use is far from trivial and it is with this problem that the field of data privacy is concerned.

Various researchers and practitioners have considered applying anonymisation techniques to data sets such as removing explicit identifiers (name, address, telephone number, social security number, etc) while leaving quasi-identifiers\footnote{A quasi-identifier is an attribute that is not sufficient to identify an individual by itself, but can do so when combined with other quasi-identifiers ({\em e.g.} gender, date of birth, etc.).} in place.
While these anonymised data sets do indeed preserve participants' privacy in isolation, auxiliary/background information make this technique extremely vulnerable to attack \cite{Machan}.  A study by L. Sweeney in 2000 \cite{Sweeney} found that as much as 87\% of the US population (216 out of 248 million people) could be uniquely identified using only three quasi-identifiers (5-digit ZIP code, gender and date of birth).
This meant census data could be linked to ``anonymised'' health records to determine the health status of unsuspecting patients.

Then in 2006, American media firm AOL released 20 million Internet search queries, with user numbers in place of other quasi-identifiers to protect users' identity. This shield of anonymity was not sufficient for privacy to be protected however, and the data was quickly removed from the public domain \cite{AOL}.  Similarly in 2008, A. Narayanan and V. Shmatikov \cite{Narayanan} successfully de-anonymised entries in an anonymised data set containing movie ratings of 500,000 subscribers which was released by movie streaming website Netflix.
The authors used the publicly-available Internet Movie Database as background information and were able to positively identify known users despite the absence of explicit identifiers in the Netflix data set.  Initial anonymisation methods such as $k$-anonymity \cite{Swe02} and $l$-diversity \cite{Machan} have been shown to be vulnerable to attacks based on background information \cite{Machan, Li}.  This fact has led to the development of a wide variety of more sophisticated approaches to publishing and mining data anonymously \cite{FungBOOK, tClose, FerrBOOK}.

The work discussed above underlines the unsatisfactory nature of ad-hoc privacy solutions and the need for a solid theoretical foundation for privacy research.  With this in mind, the concept of differential privacy was proposed in \cite{Dwork1} to provide a formal, mathematical framework for analysing privacy-preserving data publishing and mining.  The premise of differential privacy is that the outputs of queries to a database are unlikely to change substantially with the addition of a new participant's information.  This means that outputs will be similar whether or not an individual participates in the database.

There is now a considerable body of work on differential privacy in the theoretical Computer Science literature \cite{DworkSurvey}.  Many of the paper in the literature concern data or queries of some particular type or on the development of particular algorithms that satisfy differential privacy.  For instance, the design of differentially private algorithms for calculating singular vectors is considered in \cite{PrivSVD2}, while differentially private recommender systems are developed in \cite{SHERRY}; in both of these instance, the data are naturally modelled as real numbers.  Algorithms for search problems and learning are considered in \cite{SEARCH, LEARNING}.  A statistical perspective on differential privacy was developed in \cite{WassStat}; this paper considered real-valued ($[0,1]$ in fact) queries and data.  In \cite{ROUGHG}, mechanisms that maximise a suitable utility function were investigated; this paper assumed discrete finite-valued data spaces, which can of course describe categorical data.  The recent paper \cite{DOMFERR} addressed the problem of optimal mechanisms that add noise independent of the data; the queries considered are real-valued.  To date, the only major reference on differential privacy for functional data appears to be \cite{WassFunc}; in the same paper the authors emphasise the importance of being careful in selecting the measure space with respect to which probabilities are defined.  In particular, if we choose our $\sigma$-algebra to be the trivial one consisting of the empty set and the entire space, then every mechanism is differentially private.  The work of \cite{HarTal} and other similar papers on lower bounds for differentially private mechanisms considers real (or in some cases integer) valued data.  Our aim is to describe a unifying framework for all major data and query types considered so far and to initiate a study of differential privacy in the formal setting of probability measures on metric spaces.  This provides the machinery necessary for a rigorous discussion of random mechanisms and their accuracy.  The major aims of the paper are to place the study of differentially private mechanisms within the framework of probability measures on metric spaces and to present some initial results in this direction.

%%%%%%%%%%%%%%%%%%%%%%%%%%%%%%%%%%%%%%%%%%%%%%%%%%%
\subsection{Our Results}
The principal contributions of this paper are the following.  
\begin{itemize}
\item We consider differential privacy in the general framework of probability on metric spaces and highlight that it can be seamlessly applied to numerical, categorical and functional data.  
\item Our description shows how mechanisms based on database sanitisation and adding noise to query responses can be treated in a unified fashion.
\item We describe techniques for generating families of {\ab} differentially private mechanisms from simpler mechanisms.  One such example is given by sanitisation mechanisms generated from an {\ab} differentially private mechanism for the identity query.  We also show how to generate differentially private mechanisms for high-dimensional databases using mechanisms for 1-dimensional databases.
\item We describe lower bounds for the error in releasing a database in an {\ab}-differentially private fashion using product sanitisations.  This result applies to data drawn from any metric space in contrast to previous work, which has largely focussed on real and integer valued data.
\item We consider the question of testing differential privacy and describe sufficient sets for {\ab} differential privacy.
\end{itemize}

%%%%%%%%%%%%%%%%%%%%%%%%%%%%%%%%%%%%%%%%%%%%%%%%%%%
\subsection{Structure of paper}
We begin in Section~\ref{sec:prelim} by establishing the measure-theoretic framework for differential privacy.  We then consider the question of sufficient sets for differential privacy in Section \ref{sec:Suff}, address sanitisation mechanisms in Section \ref{sec:San} and focus on product sanitisations in Section \ref{sec:Prod}.  Section~\ref{sec:Acc} considers accuracy and we give concluding remarks in Section \ref{sec:Conc}.

%%%%%%%%%%%%%%%%%%%%%%%%%%%%%%%%%%%%%%%%%%%%%%%%%%%
\section{Preliminaries} \label{sec:prelim}
We first recall some standard concepts and results from probability and measure theory \cite{Bill, Rudin}.  Given an algebra $\S$ of subsets of a set $\Omega$, we use $\sigma(\S)$ to denote the smallest $\sigma$-algebra containing $\S$ and refer to $\sigma(\S)$ as the $\sigma$-algebra generated by $\S$.  A set $\Omega$ together with a $\sigma$-algebra of subsets of $\Omega$ is a measurable space.  

A monotone class $\mathcal{M}$ of subsets of some set $\Omega$ is defined by the following two properties: (i) if $\{A_i\}_{i=1}^\infty \subseteq \mathcal{M}$, and if $A_i\subseteq A_{i+1}$ for all $i$, then $\bigcup_{i=1}^\infty A_i\in\mathcal{M}$; (ii) if $\{A_i\}_{i=1}^\infty \subseteq \mathcal{M}$, and if $A_i\supseteq A_{i+1}$ for all $i$, then $\bigcap_{i=1}^\infty A_i\in\mathcal{M}$.

The next result, which appears as Theorem~3.4 in \cite{Bill}, characterises $\sigma(\S)$ as the smallest monotone class containing $\S$. 
\begin{theorem}
\label{thm:MCT} Let $\S$ be an algebra of subsets of some set $\Omega$ and let $\mathcal{M}$ be a monotone class such that $\S \subseteq \mathcal{M}$. Then $\sigma(\S) \subseteq \mathcal{M}$.
\end{theorem}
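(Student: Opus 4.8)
The plan is to work with $\mathcal{M}_0$, the smallest monotone class containing $\S$, obtained as the intersection of all monotone classes that contain $\S$. This family is non-empty (the power set of $\Omega$ is a monotone class containing $\S$), and an arbitrary intersection of monotone classes is again a monotone class, so $\mathcal{M}_0$ is well defined and is itself a monotone class containing $\S$. Since $\mathcal{M}_0 \subseteq \mathcal{M}$ for every monotone class $\mathcal{M}$ with $\S \subseteq \mathcal{M}$, it suffices to prove $\sigma(\S) \subseteq \mathcal{M}_0$. I would establish this by showing that $\mathcal{M}_0$ is in fact a $\sigma$-algebra; as it contains $\S$, it then contains $\sigma(\S)$.

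First I would record the elementary observation that a monotone class which is also an algebra is automatically a $\sigma$-algebra: given a countable family $\{A_i\}_{i=1}^\infty$ in such a class, the finite unions $B_n = A_1 \cup \cdots \cup A_n$ lie in the class by the algebra property, they increase to $\bigcup_{i=1}^\infty A_i$, and the first monotone-class property then places the union in the class. So the task reduces to showing that $\mathcal{M}_0$ is an algebra, i.e. that $\Omega \in \mathcal{M}_0$ (immediate, since $\Omega \in \S \subseteq \mathcal{M}_0$), and that $\mathcal{M}_0$ is closed under complements and under finite unions.

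For closure under complements I would introduce $\mathcal{N} = \{A \in \mathcal{M}_0 : A^c \in \mathcal{M}_0\}$ and verify directly, using De Morgan's laws together with the two defining properties of a monotone class, that $\mathcal{N}$ is a monotone class; since $\S$ is an algebra it is closed under complements, so $\S \subseteq \mathcal{N}$, and minimality of $\mathcal{M}_0$ forces $\mathcal{M}_0 \subseteq \mathcal{N}$, that is, $\mathcal{M}_0$ is closed under complements.

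Closure under finite unions is the step I expect to be the main obstacle, and it requires a two-stage bootstrapping argument. For a fixed set $A$ define $\mathcal{G}_A = \{B \in \mathcal{M}_0 : A \cup B \in \mathcal{M}_0\}$; one checks that $\mathcal{G}_A$ is always a monotone class (again a routine De Morgan/monotonicity computation). In the first stage take $A \in \S$: then $\S \subseteq \mathcal{G}_A$ because $\S$ is closed under finite unions, so minimality gives $\mathcal{M}_0 \subseteq \mathcal{G}_A$, which says $A \cup B \in \mathcal{M}_0$ for every $A \in \S$ and every $B \in \mathcal{M}_0$. In the second stage fix $B \in \mathcal{M}_0$ and reread this conclusion, using $A \cup B = B \cup A$, as $\S \subseteq \mathcal{G}_B$; since $\mathcal{G}_B$ is a monotone class, minimality again yields $\mathcal{M}_0 \subseteq \mathcal{G}_B$, i.e. $A \cup B \in \mathcal{M}_0$ for all $A, B \in \mathcal{M}_0$. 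Combining the three parts, $\mathcal{M}_0$ is an algebra, hence a $\sigma$-algebra, hence contains $\sigma(\S)$; and since $\mathcal{M}_0 \subseteq \mathcal{M}$, the theorem follows.
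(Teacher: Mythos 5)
Your proof is correct: the reduction to the minimal monotone class $\mathcal{M}_0$, the observation that an algebra which is a monotone class is a $\sigma$-algebra, closure under complements via $\mathcal{N}$, and the two-stage bootstrap with $\mathcal{G}_A$ are all sound and complete. Note that the paper does not prove this statement at all—it is quoted as Theorem~3.4 of Billingsley—and the argument you give is precisely the classical proof of the monotone class theorem found there, so you have simply supplied the standard proof the paper omits.
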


Given two measurable spaces $(X, \A_X)$ and $(Y, \A_Y)$, subsets of $X\times Y$ of the form 
$$R = \bigcup_{i=1}^p X_i \times Y_i,$$
where $X_i \in \A_X$, $Y_i \in \A_Y$ for $1 \leq i \leq p$ and $(X_i \times Y_i) \cap (X_j \times Y_j) = \emptyset$ for $i \neq j$ are known as \emph{elementary subsets}.  Let $\mathcal{R}$ denote the collection of all elementary subsets and denote the usual product $\sigma$-algebra on $X \times Y$ by $\A_{X \times Y}$.  The following result is Theorem~8.3 of \cite{Rudin}.
\begin{theorem}
\label{thm:Rudin} If $\mathcal{M}$ is a monotone class and $\mathcal{R} \subseteq \mathcal{M}$, then $\A_{X \times Y} \subseteq \mathcal{M}$.
\end{theorem}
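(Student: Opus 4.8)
The plan is to derive this from the Monotone Class Theorem (Theorem~\ref{thm:MCT}), the one extra ingredient being that the class $\mathcal{R}$ of elementary subsets is itself an algebra. First I would reduce to a minimal monotone class: set $\mathcal{M}_0 = \bigcap\{\mathcal{N} : \mathcal{N}\ \text{is a monotone class and}\ \mathcal{R}\subseteq\mathcal{N}\}$. Since an arbitrary intersection of monotone classes is again a monotone class, $\mathcal{M}_0$ is the smallest monotone class containing $\mathcal{R}$, and $\mathcal{M}_0\subseteq\mathcal{M}$ for every $\mathcal{M}$ as in the statement; hence it suffices to show $\A_{X\times Y}\subseteq\mathcal{M}_0$.

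Next I would establish the purely set-theoretic fact that $\mathcal{R}$ is an algebra of subsets of $X\times Y$. The measurable rectangles $A\times B$ (with $A\in\A_X$, $B\in\A_Y$) form a $\pi$-system, via $(A\times B)\cap(C\times D)=(A\cap C)\times(B\cap D)$; the complement of a rectangle is a disjoint union of two rectangles, $(A\times B)^c=(A^c\times Y)\cup(A\times B^c)$; and $X\times Y$ is itself a rectangle. From these observations it follows by the standard argument — the finite disjoint unions of the members of a semiring form an algebra — that $\mathcal{R}$ is closed under finite intersections and under complementation, hence is an algebra. I expect this to be the only step requiring genuine care: checking closure under complementation and, after disjointifying overlapping pieces, under finite unions involves some bookkeeping, though nothing deep.

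Finally the argument closes. By definition the product $\sigma$-algebra $\A_{X\times Y}$ is generated by the measurable rectangles; since every rectangle belongs to $\mathcal{R}$ and $\mathcal{R}$ in turn is contained in the $\sigma$-algebra generated by the rectangles, we get $\sigma(\mathcal{R})=\A_{X\times Y}$. Now apply Theorem~\ref{thm:MCT} with $\S=\mathcal{R}$ (an algebra) and the monotone class $\mathcal{M}_0\supseteq\mathcal{R}$: this gives $\A_{X\times Y}=\sigma(\mathcal{R})\subseteq\mathcal{M}_0\subseteq\mathcal{M}$, which is the claim.

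If one prefers not to route through Theorem~\ref{thm:MCT}, the classical direct argument works equally well: for $P\subseteq X\times Y$ let $\Omega(P)=\{Q : P\setminus Q,\ Q\setminus P,\ P\cup Q\in\mathcal{M}_0\}$, observe that $Q\in\Omega(P)$ iff $P\in\Omega(Q)$ and that each $\Omega(P)$ is a monotone class, verify $\mathcal{R}\subseteq\Omega(P)$ whenever $P$ is elementary so that $\mathcal{M}_0\subseteq\Omega(P)$ there, then use the symmetry to conclude $\mathcal{M}_0\subseteq\Omega(Q)$ for all $Q\in\mathcal{M}_0$; this shows $\mathcal{M}_0$ is closed under differences and finite unions, hence — being a monotone class containing $X\times Y$ — is a $\sigma$-algebra, and it contains every rectangle, so it contains $\A_{X\times Y}$.
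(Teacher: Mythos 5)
Your argument is correct, but note that the paper offers no proof of this statement at all: it is quoted verbatim as Theorem~8.3 of Rudin's \emph{Real and Complex Analysis}, so any proof you supply is "different from the paper's" by default. Your main route is a nice one precisely in the context of this paper, because it makes Theorem~\ref{thm:Rudin} a corollary of the already-stated Theorem~\ref{thm:MCT}: the only genuinely new ingredient is the set-theoretic fact that the elementary sets $\mathcal{R}$ form an algebra, which follows from the rectangles being a semi-algebra (closed under finite intersections, complement of a rectangle a disjoint union of two rectangles, and $X\times Y$ itself a rectangle -- this last point is what upgrades "ring" to "algebra", so your passing use of the word "semiring" should really be "semi-algebra"). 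Given that, $\sigma(\mathcal{R})=\A_{X\times Y}$ and Theorem~\ref{thm:MCT} applied with $\S=\mathcal{R}$ finishes the job; in fact your reduction to the minimal monotone class $\mathcal{M}_0$ is unnecessary here, since Theorem~\ref{thm:MCT} can be applied directly to the given $\mathcal{M}$. Your fallback sketch with the classes $\Omega(P)$ is essentially Rudin's own proof of his Theorem~8.3, which avoids invoking the algebra-version of the monotone class theorem but instead re-proves its mechanism in place, at the cost of the bookkeeping with $\Omega(P)$ and the symmetry argument; it still needs the same elementary-set algebra facts (closure of $\mathcal{E}$ under differences, unions and intersections) to get started. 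So both routes are sound; the first is shorter given what the paper already states, the second is self-contained relative to Rudin.
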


Finally, for a measure $\mu$ on a measurable space $(X, \A_X)$, we recall the following simple fact.
\begin{proposition}\label{pro:Measure}
Suppose that $\{A_i\}_{i=1}^\infty \subseteq \A_X$ satisfies $A_i\subseteq A_{i+1}$ for all $i$, then $\lim_{i\to\infty} \mu(A_i)=\mu\left(\bigcup_{i=1}^\infty A_i\right)$.

Similarly, if $A_i\supseteq A_{i+1}$ for all $i$, then $\lim_{i\to\infty} \mu(A_i)=\mu\left(\bigcap_{i=1}^\infty A_i\right)$.
\end{proposition}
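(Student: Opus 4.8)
The plan is to derive both statements from the countable additivity of $\mu$, using the standard device of disjointifying a nested sequence. For the first (increasing) case, I would set $B_1 = A_1$ and $B_i = A_i \setminus A_{i-1}$ for $i \geq 2$. Because the $A_i$ increase, the sets $B_i$ are pairwise disjoint, each lies in $\A_X$ (since $\A_X$ is closed under finite unions and differences), and $\bigcup_{i=1}^n B_i = A_n$ for every $n$ while $\bigcup_{i=1}^\infty B_i = \bigcup_{i=1}^\infty A_i$. Countable additivity then gives $\mu\left(\bigcup_{i=1}^\infty A_i\right) = \sum_{i=1}^\infty \mu(B_i)$, and by finite additivity the $n$-th partial sum of this series is exactly $\mu(A_n)$; hence the series equals $\lim_{n\to\infty}\mu(A_n)$, which is the assertion.

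For the second (decreasing) case, I would reduce to the first by complementing inside $A_1$. Put $C_i = A_1 \setminus A_i$; then $\{C_i\}$ is an increasing sequence in $\A_X$ with $\bigcup_{i=1}^\infty C_i = A_1 \setminus \bigcap_{i=1}^\infty A_i$. Applying the first part yields $\mu\left(A_1 \setminus \bigcap_{i=1}^\infty A_i\right) = \lim_{i\to\infty}\mu(A_1 \setminus A_i)$. Since we work with a probability measure, so that $\mu(A_1) < \infty$, I may write $\mu(A_1 \setminus A_i) = \mu(A_1) - \mu(A_i)$ and likewise $\mu(A_1 \setminus \bigcap_{i=1}^\infty A_i) = \mu(A_1) - \mu\left(\bigcap_{i=1}^\infty A_i\right)$; cancelling the finite quantity $\mu(A_1)$ from both sides of the displayed identity gives $\lim_{i\to\infty}\mu(A_i) = \mu\left(\bigcap_{i=1}^\infty A_i\right)$.

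The one step that genuinely needs care — the ``main obstacle'', such as it is — is the finiteness invoked in the decreasing case: both the subtraction identity $\mu(A_1 \setminus A_i) = \mu(A_1) - \mu(A_i)$ and the subsequent cancellation require $\mu(A_1) < \infty$. In the present setting this is automatic because $\mu$ is a probability measure, so no extra hypothesis is needed; for a general (possibly infinite) measure one would have to add the assumption $\mu(A_1) < \infty$. Everything else is a routine consequence of $\sigma$-additivity, with the measurability of the auxiliary sets $B_i$ and $C_i$ immediate from the algebra properties of $\A_X$.
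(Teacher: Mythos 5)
Your argument is correct and is the standard one: disjointify the increasing sequence and apply countable additivity, then handle the decreasing case by complementing inside $A_1$ and reducing to the increasing case. The paper itself gives no proof of this proposition --- it is stated as a recalled fact from the measure-theory references --- so there is no in-paper argument to compare against; yours is exactly the textbook route one would expect. One point worth flagging: the proposition is stated for an arbitrary measure $\mu$ on $(X,\A_X)$, while your decreasing-case step appeals to $\mu$ being a probability measure to justify $\mu(A_1\setminus A_i)=\mu(A_1)-\mu(A_i)$ and the cancellation of $\mu(A_1)$. As you yourself observe, for a general measure the second assertion needs the hypothesis $\mu(A_1)<\infty$, and it is genuinely false without it (take $A_i=[i,\infty)$ under Lebesgue measure). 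In the paper's only application of the proposition, in the proof of Theorem~\ref{thm:Suff}, the measure is the probability measure $\mathbb{P}$, so your assumption is harmless in context, but strictly it supplements the stated hypotheses rather than following from them; your explicit acknowledgement of this is the right way to handle it.
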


%%%%%%%%%%%%%%%%%%%%%%%%%%%%%%%%%%%%%%%%%%%%%%%%%%%
\subsection{Database Model}
The individual entries of the databases we consider are elements of a set $D \subseteq U$ where $U$ is a metric space with metric $\rho$.  We equip $U$ with the Borel $\sigma$-algebra generated by the open sets in $U$ (in the metric topology);  $D$ then naturally inherits a $\sigma$-algebra $\A_D$.  A database $\md$ with $n$ rows is given by a vector $\md=(d_1,\dots,d_n) \in  D^n$ in which $d_i \in D$ is the $i$th row.  Throughout, we assume that $U^n$ (and $D^n$) is equipped with the usual product $\sigma$-algebra $\A_{U^n}$ generated by $\{A_1\times\cdots\times A_n: A_i\in\A_U \}$.  This ensures that projection maps $\pi_i:U^n \rightarrow U$ given by $\pi_i(x_1, \ldots, x_n) = x_i$ are measurable.    

It is worth highlighting the generality of this setting: the metric space $D$ can contain numerical, categorical or functional data; moreover, it can be discrete or continuous.  

\begin{example}
\label{ex:sets}
If our data concern the hobbies or interests of people, we consider a set of all possible hobbies, denoted by $\mathcal{H}$.  For simplicity it is not unreasonable to assume that $\mathcal{H}$ is finite.  Our data entries are then drawn from the power set $D := 2^{\mathcal{H}}$ of $\mathcal{H}$, which will again be a finite set.  There are various natural choices of metric in this case.  We could consider the discrete metric on $D$ in which $\rho_1(A, B) = 1$ if $A \neq B$ and 0 otherwise.  Alternatively, we could choose the metric given by symmetric distance: $\rho_2(A, B) = |(A \cup B) \backslash (A \cap B)|$.  In both of these cases, the Borel $\sigma$-algebra consists of all subset of $D$.   Note that there is no requirement that each entry in a database in $D^n$ have the same size or cardinality, reflecting the fact that not all of us have the same number of interests or hobbies.    
\end{example}

\begin{example}
\label{eq:funcs}
In readings from field deployed sensors, each reading has a time-stamp giving rise to time-course data.  Another example is in smart metering where the supplier collects data from consumers giving electriciy consumption over a time-window.  Data of this type is naturally represented as either a function or a sequence of real numbers.  In our framework, we can take $U$ to be a sequence space such as $l_{\infty}$ or $l_2$, or an appropriate function space such as $C([0, T])$ or $L_2([0,T])$, where $T$ represents the billing period (for instance).   All of these spaces have natural norms defined on them (in fact they are all Banach spaces) and can be equipped with the Borel $\sigma$-algebra generated from the open sets in the norm topology.  
\end{example}

We say that two databases $\md=(d_1,\dots,d_n)$ and $\md^\prime=(d_1^\prime,\dots,d_n^\prime)$ in $D^n$ are \emph{neighbours}, and write  $\md \sim \md^\prime$, if there is some $j \in  \{1,\dots,n\}$ such that $d_j\neq d_j^\prime$ and $d_i = d_i^\prime$ for all $i\in\{1,\dots,n\}\setminus\{j\}$.  More generally, we denote by $h(\md, \md')$ the \emph{Hamming distance} between $\md$ and $\md'$. 

\textbf{Remark:}  While we present our results for Hamming distance on $D^n$, the results of Sections \ref{sec:Suff} and \ref{sec:San} are also valid with respect to other metrics $\rho$ on $D^n$ where we define $\md \sim \md^\prime$ if $\rho(\md, \md') = 1$.  The work of Sections \ref{sec:Prod} and \ref{sec:Acc} relies on similarity being defined with respect to Hamming distance however.

For the most part, we assume that the data space $D$ is compact.  This is immediate if $D$ is finite (as in Example \ref{ex:sets}) and is a natural assumption in most realistic situations.  
When $D$ is compact, we denote by $\diam(D)$ the diameter of $D$:
\begin{align}\label{eq:diam}
\diam(D) = \max_{d, d^\prime \in D} \rho(d, d').
\end{align}

%%%%%%%%%%%%%%%%%%%%%%%%%%%%%%%%%%%%%%%%%%%%%%%%%%%
\subsection{Query Model}
We consider a very general query model.  The set of all possible responses is assumed to be a metric space $E_Q$ with metric $\rho_Q$ and equipped with the Borel $\sigma$-algebra $\A_Q$.   The query is then a measurable function, $Q: U^n \to E_Q$ and hence $Q^{-1}(A)\in\A_{U^n}$ for all $A \in \A_Q$.  

\begin{example}
As with the data in $\md$, queries are not restricted to take numerical values in this setting.  For instance, if we consider Example \ref{ex:sets} above, then we could consider a query asking for the number of people in the database who are interested in Classical Music or Football for instance: this would clearly be a numeric query.  On the other hand, we could also request the 3 most common hobbies in the database, the output of which would be a set.  
\end{example}
We next formally introduce the concept of a \emph{response mechanism} within this general framework.  Let $(\Omega, \mathcal{F}, \mathbb{P})$ be a probability space.  Given a collection of queries $\mathcal{Q}(n)$, a response mechanism is a set of measurable mappings
\begin{equation}\label{eq:Mech1}
\{X_{Q, \md}: \Omega \to E_Q \mid Q \in \Q(n)\}.
\end{equation}
Note that $X_{Q, \md}$ is an $E_Q$-valued random variable for each $Q$ and $\md$. 

\textit{Sanitised Response Mechanisms}

If there exists some family $$\{X_{\md}: \Omega \to U^n\}$$ of measurable mappings such that 
\begin{equation}\label{eq:Mech2}
X_{Q, \md} = Q \circ X_{\md}
\end{equation}
for all $\md \in D^n$, then the mechanism given by (\ref{eq:Mech2}) is a \emph{sanitised response mechanism}.  The motivation behind this choice of terminology is that the mechanism is generated by sanitising the database via the random variable $X_{\md}$ before answering the query $Q$, \emph{i.e.} non-interactive data release.  If the database is sanitised by adding appropriate noise, the mapping $X_{\md}$ takes the form $X_{\md} = \md + N$ for some $U^n$-valued random vector $N$.  Clearly, in order to define a mechanism by adding noise, it is necessary for $U^n$ to have a suitable algebraic structure: a vector space or monoid for example.  

\textit{Output Perturbations}

Mechanisms that provide Differential Privacy by perturbing the query response can be described as follows.  Let a query $Q:U^n \rightarrow E_Q$ be given.  Assume that there is a family of measurable functions $\{ X_q: \Omega \rightarrow E_Q \mid q \in E_Q\}$.  The output perturbation mechanism is defined as 
\begin{equation}\label{eq:PertMech}
X_{Q, \md} =  X_{Q(\md)}
\end{equation}
In situations involving real-valued data, $X_q$ typically takes the form $X_q = q + N$ where $N$ represents the noise added to the query response.  For set-valued queries, $X_q$ can be defined by specifying some probability mass function on the collection of possible query responses.      
%%%%%%%%%%%%%%%%%%%%%%%%%%%%%%%%%%%%%%%%%%%%%%%%%%%
\subsection{Differential Privacy}
In the interest of completeness and clarity, we now recall the definition of Differential Privacy and write it in the setting of this paper. 
\begin{definition}[Differential Privacy with Respect to a Query]
Let $\e \ge 0$, $0\le \d \le 1$ be given.  A response mechanism is {\ab}-differentially private with respect to a query $Q_0\in\mathcal{Q}(n)$ if for all $\md \sim \md^\prime\in D^n$ and all $A \in \A_{Q_0}$,
\begin{equation}\label{eq:DP}
\mathbb{P}(X_{Q_0,\md} \in A) \le e^\e \mathbb{P}(X_{Q_0, \md^\prime} \in A)+\d.
\end{equation}
It is important to note that the relation $\md \sim \md^\prime$ is symmetric, so inequality (\ref{eq:DP}) is required to hold when $\md$ and $\md^\prime$ are swapped.
\end{definition}

\begin{definition}[Differential Privacy]
A response mechanism is {\ab}-differentially private with respect to $\mathcal{Q}(n)$ if it is {\ab}-differentially private with respect to every query $Q_0\in \mathcal{Q}(n)$.
\end{definition}
The above definitions are often referred to as \emph{relaxed} differential privacy; the original notion of differential privacy introduced in \cite{Dwork1} considers the case where $\d = 0$.  

The \emph{expected error} of a mechanism (\ref{eq:Mech1}) for a query $Q$ on a database $\md$ is given by the expectation $\mathbb{E}[\dist(X_{Q, \md}, Q(\md))]$.  

%%%%%%%%%%%%%%%%%%%%%%%%%%%%%%%%%%%%%%%%%%%%%%%%%%%
\section{Sufficient Sets for Differential Privacy} \label{sec:Suff}
In this brief section, we consider the following question: is there a strict subset $\S$ of $\A_Q$ such that if (\ref{eq:DP}) is satisfied for all $A$ in $\S$, it is guaranteed to be satisfied for all $A$ in $\A_Q$?  We refer to such a set as a sufficient set for differential privacy.

%In the second question above, we are asking for a single sanitisation mechanism $\{X_{\md}: \md \in D^n\}$ such that the associated response mechanism $\{Q \circ X_{\md}: \md \in D^n\}$ is differentially private for all queries $Q$.  

%%%%%%%%%%%%%%%%%%%%%%%%%%%%%%%%%%%%%%%%%%%%%%%%%%%
Depending on the application, the query output space may be a subset of $\mathbb{R}^n$ or of a sequence or function space such as $C([0, T])$.  A key question for the practical deployment of differentially private mechanisms is how to determine if a mechanism is in fact {\ab}-differentially private.  Testing (\ref{eq:DP}) on the entire $\sigma$-algebra is clearly a prohibitively difficult task.  Our next result shows that it is sufficient to test this condition on any \emph{algebra} $\S \subset \A_Q$ that generates $\A_Q$.  

\begin{theorem}
\label{thm:Suff} Let a response mechanism (\ref{eq:Mech1}) and a query $(E_Q, \A_Q, Q)$ be given and let $\S \subset \A_Q$ be an algebra such that $\sigma(S) = \A_Q$.  If (\ref{eq:DP}) holds for all sets $A \in \S$, then it holds for all sets $A \in \A_Q$.
\end{theorem}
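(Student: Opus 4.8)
The plan is to invoke the Monotone Class Theorem (Theorem~\ref{thm:MCT}): I would let $\mathcal{M}$ denote the collection of all $A \in \A_Q$ for which the defining inequality (\ref{eq:DP}) holds for every neighbouring pair $\md \sim \md'$ in $D^n$. By hypothesis $\S \subseteq \mathcal{M}$, and $\S$ is an algebra with $\sigma(\S) = \A_Q$; so if I can show that $\mathcal{M}$ is a monotone class, Theorem~\ref{thm:MCT} immediately gives $\A_Q = \sigma(\S) \subseteq \mathcal{M}$, which is the assertion of the theorem.

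The bulk of the work is then verifying the two closure properties of a monotone class. For a fixed database $\md$ I would write $\mu_\md$ for the image probability measure $A \mapsto \Prob(X_{Q,\md} \in A)$ on $(E_Q, \A_Q)$ --- this is well defined because $X_{Q,\md}$ is measurable --- so that (\ref{eq:DP}) reads $\mu_\md(A) \le e^\e \mu_{\md'}(A) + \d$. Given an increasing sequence $\{A_i\} \subseteq \mathcal{M}$ with union $A$ (note $A \in \A_Q$ since $\A_Q$ is a $\sigma$-algebra), and a fixed pair $\md \sim \md'$, I have $\mu_\md(A_i) \le e^\e \mu_{\md'}(A_i) + \d$ for every $i$; letting $i \to \infty$ and applying Proposition~\ref{pro:Measure} to $\mu_\md$ on the left and to $\mu_{\md'}$ on the right yields $\mu_\md(A) \le e^\e \mu_{\md'}(A) + \d$, so $A \in \mathcal{M}$. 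The decreasing case is handled identically using the second half of Proposition~\ref{pro:Measure} (here it matters that $\mu_\md$ and $\mu_{\md'}$ are finite, so that continuity from above is available). This establishes that $\mathcal{M}$ is a monotone class and completes the proof.

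I do not anticipate a genuine obstacle here; the only things needing a little care are passing to the limit on both sides of the inequality at once --- the right-hand side $e^\e \mu_{\md'}(A_i) + \d$ converges to $e^\e \mu_{\md'}(A) + \d$ by the same continuity property --- and the quantifier over neighbouring pairs, which means $\mathcal{M}$ is really the intersection over all such pairs of the corresponding monotone classes; since an intersection of monotone classes is again a monotone class, this costs nothing. I would also remark that, because $\sim$ is symmetric, the ``swapped'' instances of (\ref{eq:DP}) are already among the conditions defining $\mathcal{M}$, so no separate argument is needed for them.
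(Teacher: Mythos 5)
Your proposal is correct and follows essentially the same route as the paper: both define the collection of sets satisfying (\ref{eq:DP}), show it is a monotone class via Proposition~\ref{pro:Measure} applied to the increasing and decreasing cases, and then invoke Theorem~\ref{thm:MCT}. Your additional remarks (finiteness of the measures for continuity from above, and viewing the class as an intersection over neighbouring pairs) are sensible clarifications of points the paper leaves implicit, but they do not change the argument.
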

\begin{proof}
Let $\B$ denote the collection of sets in $E_Q$ for which (\ref{eq:DP}) holds.  By assumption, $\S \subseteq \B$.  Now let $A_1, A_2, \ldots$ be any collection of sets in $\B$ with $A_{i} \subseteq A_{i+1}$ for all $i$.  Define $\bar{A} := \cup_i A_i$ and let $\md, \md^\prime \in D^n$ with $\md \sim \md'$ be given.   As each $A_i \in \B$, it follows that 
$$\mathbb{P}(X_{Q, \md}(\omega) \in A_i) \leq e^{\e} \mathbb{P}(X_{Q, \md'}(\omega) \in A_i) + \d$$
for all $i$.  As the sequence $A_i$ is increasing, it now follows from Proposition \ref{pro:Measure} that 
\begin{align*}
\mathbb{P}(X_{Q, \md}(\omega) \in \bar{A})&= \lim_{i\to\infty} \mathbb{P}(X_{Q, \md}(\omega) \in A_i) \\
& \le e^\e \lim_{i\to\infty} \mathbb{P}(X_{Q, \md'}(\omega) \in A_i)+\d \\
&= e^\e \mathbb{P}(X_{Q, \md'}(\omega) \in \bar{A})+\d,
\end{align*}
and so $\bar{A}\in\B$.
An identical argument shows that for any sequence $\{A_i\}$ of sets in $\B$ with $A_{i} \supseteq A_{i+1}$ for all $i$, and $\bar{A} = \cap_i A_i$, $\bar{A} \in \B$.  Taken together these two observations imply that $\B$ is a \emph{monotone class}.  Moreover, $\S \subseteq \B$.  The result now follows immediately from Theorem~\ref{thm:MCT}.
\end{proof}

In the next example we show how Theorem \ref{thm:Suff} can be applied to differentially private mechanisms for functional data to obtain results such as those described in Section 3.1 of \cite{WassFunc}.  
\begin{example}\label{ex:Func}
Suppose our query $Q$ takes values in the space $C([0, 1])$ of continuous functions on $[0,1]$ equipped with the norm $\|f\|_{\infty} = \textrm{sup}\{|f(t)| : t \in [0,1]\}$ and the $\sigma$-algebra $\mathcal{A}_Q$ of Borel sets generated by the norm topology.  Let a mechanism $X_{Q, \md}$ be given.  Then $X_{Q, \md}(\omega)$ is in $C([0,1])$ for each $\omega \in \Omega$.  

Given a positive integer $k$ and real numbers $0 \leq t_1 < \cdots < t_k \leq 1$, consider the projection $\pi_{t_1, \ldots , t_k}: C([0,1]) \rightarrow \mathbb{R}^k$ given by 
$$\pi_{t_1, \ldots, t_k}(f) = (f(t_1), \ldots, f(t_k)).$$
These mappings are measurable with respect to the usual Borel $\sigma$-algebra on $\mathbb{R}^k$ and hence we can define the $\mathbb{R}^k$-valued mechanism $X^{t_1, \ldots t_k}_{Q, \md} = \pi_{t_1, \ldots , t_k} \circ X_{Q, \md}.$

We claim that if the finite-dimensional mechanisms $X^{t_1, \ldots t_k}_{Q, \md}$ are {\ab}-differentially private for all $k$ and $t_1, \ldots, t_k$, then the mechanism $X_{Q, \md}$ is {\ab}-differentially private.  The argument to show this is as follows.  From the assumption on the finite-dimensional mechanisms, it follows immediately that (\ref{eq:DP}) holds for all (so-called cylinder sets) sets $A$ of the form 
$$A = \pi^{-1}_{t_1, \ldots, t_k}(B)$$
where $B$ is a Borel set in $\mathbb{R}^k$.  These sets form an algebra and it follows from Theorem VII.2.1 (page 212) of \cite{Parth} that the $\sigma$-algebra they generate is the Borel $\sigma$-algebra of $C([0,1])$.  It follows immediately from Theorem \ref{thm:Suff} that $X_{Q, \md}$ defines an {\ab}-differentially private mechanism on $C([0,1])$ as claimed. 
\end{example}

%%%%%%%%%%%%%%%%%%%%%%%%%%%%%%%%%%%%%%%%%%%%%%%%%%%
\section{Sanitisation Mechanisms and the Identity Query} \label{sec:San}
A popular approach to designing differentially private response mechanisms is to add \emph{noise} to the query response $Q(\md)$.  It is known however, that this can lead to privacy compromises by averaging a large number of responses to an identical query \cite{DworkSurvey}, unless the number or type of queries that can be asked is restricted.  We now show that if a sanitisation mechanism (\ref{eq:Mech2}) is {\ab}-differentially private with respect to the identity query, then it is {\ab}-differentially private with respect to \emph{any} query.  It is important to appreciate that we place no restrictions on the query $Q$; it could map into a sequence space such as $l_{\infty}$ and represent an infinite sequence of individual queries for example. 

The identity query is defined by the identity map on the ambient space $U^n$.  
In sanitised response mechanisms, the ``perturbed database'' $X_\md(\omega)$ is disclosed by the response to the identity query.  However, disclosing $X_\md(\omega)$ need \emph{not} disclose the original database $\md$ provided that an appropriate privacy-preserving perturbation has been applied.   Importantly, if {\ab}-differential privacy is achieved with respect to query $I_n$, then the response to \emph{any} query is {\ab}-differentially private.

\begin{theorem}[Identity Query]\label{th:rank2}
Consider a sanitised response mechanism as defined in (\ref{eq:Mech2}).  Suppose this response mechanism is {\ab}-differentially private with respect to the identity query. Then it is {\ab}-differentially private with respect to any query $(E_Q, \A_Q, Q)$.
\end{theorem}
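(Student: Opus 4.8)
The plan is to reduce differential privacy with respect to an arbitrary query $Q$ to differential privacy with respect to the identity query by exploiting the pushforward structure of a sanitised mechanism. The starting observation is that for a sanitised response mechanism we have $X_{Q,\md} = Q \circ X_{\md}$, so for any $A \in \A_Q$ the events coincide:
$$\{\omega : X_{Q,\md}(\omega) \in A\} = \{\omega : X_{\md}(\omega) \in Q^{-1}(A)\},$$
and hence $\Prob(X_{Q,\md} \in A) = \Prob(X_{\md} \in Q^{-1}(A))$.

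First I would invoke the measurability of the query map $Q: U^n \to E_Q$: since $A \in \A_Q$, the preimage $Q^{-1}(A)$ lies in the product $\sigma$-algebra $\A_{U^n}$. This is the only place where the hypotheses on $Q$ (that it is a measurable function into $(E_Q, \A_Q)$) are used, and it is what makes the reduction legitimate.

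Next, apply the assumption that the mechanism is {\ab}-differentially private with respect to the identity query. By definition this gives, for every pair of neighbours $\md \sim \md^\prime$ in $D^n$ and every set $B \in \A_{U^n}$,
$$\Prob(X_{\md} \in B) \le e^{\e}\,\Prob(X_{\md^\prime} \in B) + \d.$$
Taking $B = Q^{-1}(A)$ and substituting the event identity above yields
$$\Prob(X_{Q,\md} \in A) \le e^{\e}\,\Prob(X_{Q,\md^\prime} \in A) + \d,$$
which is precisely inequality (\ref{eq:DP}) for the query $Q$. Since the neighbour pair $\md \sim \md^\prime$ was arbitrary, this establishes {\ab}-differential privacy with respect to $Q$, for any query $Q$.

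There is essentially no serious obstacle here; the substance of the theorem is the recognition that answering a query after sanitisation is just post-processing of the sanitised database, and post-processing cannot weaken a privacy guarantee. The only points requiring a little care are bookkeeping: confirming that $Q^{-1}(A) \in \A_{U^n}$ (which is exactly measurability of $Q$), and checking that the required inequality holds with $\md$ and $\md^\prime$ interchanged — but this is automatic from the symmetry of the relation $\sim$, as already noted after the definition of differential privacy with respect to a query.
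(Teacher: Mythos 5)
Your proof is correct and follows essentially the same route as the paper's: rewrite $\Prob(X_{Q,\md}\in A)$ as $\Prob(X_{\md}\in Q^{-1}(A))$ using measurability of $Q$, then apply the identity-query privacy guarantee to the set $Q^{-1}(A)$. No gaps; this is exactly the post-processing argument the paper gives.
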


\begin{proof}
Let $\md, \md'$ be two elements of $D^n$ with $\md \sim \md'$.  By assumption,
\begin{equation}
\label{eq:ID1}
\mathbb{P}(X_{\md}(\omega) \in E) \le e^\e \mathbb{P}(X_{\md'}(\omega) \in E) + \d,
\end{equation}
for all $E \in \A_{U^n}$.  
Let a query $(E_Q, Q, \A_Q)$ and $A \in \A_Q$ be given.  As $Q$ is measurable, $Q^{-1}(A) \in \A_{U^n}$.  Then, using (\ref{eq:ID1}),
\begin{align*}
\mathbb{P}(X_{Q, \md}(\omega) \in A) &= \mathbb{P}(Q(X_{\md}(\omega)) \in A) \\
&= \mathbb{P}(X_{\md}(\omega) \in Q^{-1}(A)) \\
& \le e^\e \mathbb{P}(X_{\md'}(\omega) \in Q^{-1}(A)) + \d \\
&= \mathbb{P}(Q(X_{\md'}(\omega)) \in A) + \d.
\end{align*}
Thus, the mechanism satisfies {\ab}-differential privacy with respect to $Q$ also.
\end{proof}

\begin{corollary}
Consider a sanitised response mechanism as defined in (\ref{eq:Mech2}) and suppose $I_n\in\Q(n)$. Then this response mechanism is {\ab}-differentially private with respect to $I_n$ if and only if it is {\ab}-differentially private with respect to every query $Q\in\Q(n)$.
\end{corollary}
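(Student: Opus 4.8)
The plan is to separate the biconditional into its two implications; each one is immediate from material already established, so the ``proof'' is really just a matter of correctly invoking Theorem~\ref{th:rank2} and using the hypothesis $I_n\in\Q(n)$.

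For the forward implication, suppose the sanitised response mechanism is {\ab}-differentially private with respect to the identity query $I_n$. Theorem~\ref{th:rank2} applies verbatim: it guarantees that such a mechanism is {\ab}-differentially private with respect to \emph{any} query $(E_Q,\A_Q,Q)$, placing no restriction on $Q$. In particular this holds for every $Q\in\Q(n)$, and by the definition of {\ab}-differential privacy with respect to $\Q(n)$ (namely, that it hold with respect to each member of $\Q(n)$), this is exactly the desired conclusion. First I would state this and cite Theorem~\ref{th:rank2}; no further calculation is needed.

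For the reverse implication, I would use the assumption $I_n\in\Q(n)$. If the mechanism is {\ab}-differentially private with respect to every query in $\Q(n)$, then in particular it is {\ab}-differentially private with respect to $I_n$, since $I_n$ is one of those queries. This is the only place the hypothesis $I_n\in\Q(n)$ is used: it is what makes the converse direction meaningful, whereas the forward direction holds regardless of whether $I_n$ lies in $\Q(n)$.

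There is no genuine obstacle to overcome. The substantive content is carried entirely by Theorem~\ref{th:rank2} (which in turn rests on the monotone class argument behind Theorem~\ref{thm:Suff} via measurability of $Q$), and the corollary simply repackages that theorem, together with the membership $I_n\in\Q(n)$, as an equivalence. The proof is therefore short: cite Theorem~\ref{th:rank2} for one direction, observe the trivial specialisation for the other.
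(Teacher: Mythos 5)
Your proposal is correct and matches the paper's own proof exactly: the forward direction is an immediate application of Theorem~\ref{th:rank2}, and the converse follows trivially from the hypothesis $I_n\in\Q(n)$. Nothing further is needed.
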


\begin{proof}
``$\Rightarrow$'': Theorem~\ref{th:rank2}.

``$\Leftarrow$'': The response mechanism is {\ab}-differentially private with respect to every query $Q\in\Q(n)$ by assumption, therefore it must be {\ab}-differentially private with respect to the identity query $I_n$, since $I_n\in\Q(n)$.
\end{proof}

Observe that the number and details of queries $Q\in\mathcal{Q}(n)$ do not need to be specified in advance for Theorem~\ref{th:rank2} to hold, and so queries may be interactive and unlimited in number.   This highlights a fundamental difference between privacy mechanisms that perturb the query response (\emph{e.g.} by adding Laplacian or Gaussian noise) vs privacy mechanisms that perturb the database itself.  Namely, in the former the added noise can be averaged out by an adversary repeating a query multiple times, thereby requiring a limit to be placed on the number of queries allowed, while in the latter an averaging attack of this sort is impossible; a repeated query will simply receive the same answer each time. 

%%%%%%%%%%%%%%%%%%%%%%%%%%%%%%%%%%%%%%%%%%%%%%%%%%%
\section{Product Sanitisations} \label{sec:Prod}
In this section, we derive a result that relates differentially private sanitisation-based mechanisms for $n$-dimensional databases in $D^n$ to mechanisms for simple $1$-dimensional databases.  

%%%%%%%%%%%%%%%%%%%%%%%%%%%%%%%%%%%%%%%%%%%%%%%%%%%
Before showing how differentially private sanitisation mechanisms for databases in $D^n$ can be constructed from simple mechanisms for databases in $D$, we first establish a number of technical results.

\begin{lemma}\label{lm:decomp}
Let $A_1, \ldots, A_p$, $B_1, \ldots , B_p$ be two collections of non-empty sets.  Then the finite union $\bigcup_{i=1}^p (A_i \times B_i)$ can be written as
\[
\bigcup_{i=1}^p (A_i \times B_i) = \bigcup_{I \subseteq \{1, \ldots, p\}} (\widetilde{A}_I\times\widetilde{B}_I),
\]
where $\widetilde{A}_I = \bigcup_{i\in I}A_i$ and $\widetilde{B}_I = \bigcap_{i\in I} B_i \setminus \bigcup_{i\notin I} B_i$.  Moreover,   $\widetilde{B}_I\cap\widetilde{B}_J = \emptyset$ for all $I \ne J$.
\end{lemma}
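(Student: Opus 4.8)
The plan is to establish the set identity by a straightforward double inclusion, using the observation that the relevant index set $I$ attached to a point $(x,y)$ should be precisely $\{i : y \in B_i\}$, and to dispatch the disjointness claim by a separate short argument.

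For the disjointness, suppose $I \neq J$ and pick $k$ in the symmetric difference; without loss of generality $k \in I$ and $k \notin J$. Since $k \in I$ we have $\widetilde{B}_I \subseteq \bigcap_{i \in I} B_i \subseteq B_k$, while since $k \notin J$ the set $B_k$ is one of the sets subtracted in the definition of $\widetilde{B}_J$, so $\widetilde{B}_J \subseteq \left(\bigcup_{i \notin J} B_i\right)^{c} \subseteq B_k^{c}$. Hence $\widetilde{B}_I \cap \widetilde{B}_J = \emptyset$.

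For the inclusion $\supseteq$, take $(x,y) \in \widetilde{A}_I \times \widetilde{B}_I$. Note $I \neq \emptyset$, since otherwise $\widetilde{A}_I = \bigcup_{i \in \emptyset} A_i = \emptyset$ and the product is empty (so the $I = \emptyset$ term contributes nothing to the right-hand union, and non-emptiness of the $A_i$ is only relevant in the sense of this observation). Choose $i_0 \in I$ with $x \in A_{i_0}$; since $y \in \widetilde{B}_I \subseteq \bigcap_{i \in I} B_i$ and $i_0 \in I$, we get $y \in B_{i_0}$, so $(x,y) \in A_{i_0} \times B_{i_0} \subseteq \bigcup_{i=1}^p (A_i \times B_i)$. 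For the inclusion $\subseteq$, take $(x,y)$ with $(x,y) \in A_j \times B_j$ for some $j$, and set $I := \{i \in \{1,\dots,p\} : y \in B_i\}$. Then $j \in I$, so $I \neq \emptyset$, and $x \in A_j \subseteq \widetilde{A}_I$. By the definition of $I$, $y \in B_i$ for every $i \in I$ and $y \notin B_i$ for every $i \notin I$, i.e. $y \in \bigcap_{i \in I} B_i \setminus \bigcup_{i \notin I} B_i = \widetilde{B}_I$. Hence $(x,y) \in \widetilde{A}_I \times \widetilde{B}_I$, which completes the argument.

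There is no substantive obstacle here: the content of the lemma is entirely bookkeeping. The one point requiring a little care is the correct choice of $I$ in the $\subseteq$ direction (take \emph{all} indices $i$ with $y \in B_i$, not merely one witnessing index), together with noticing that the degenerate term $I = \emptyset$ is vacuous and so causes no trouble.
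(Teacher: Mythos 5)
Your proof is correct and follows essentially the same route as the paper's: both directions of the inclusion hinge on taking $I$ to be the full set $\{i : y \in B_i\}$, and the disjointness argument via an index $k$ in exactly one of $I$, $J$ is the same. Your explicit remark that the $I=\emptyset$ term is vacuous is a minor clarification the paper leaves implicit, but otherwise there is no substantive difference.
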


\begin{proof}
We need to prove equality and disjointness of the decomposition.
Let $(a,b)\in\bigcup_{i=1}^p (A_i\times B_i)$. Then there exists at least one $i^*$ such that $(a, b)\in A_{i^*}\times B_{i^*}$. Let $I_b:=\{i: b\in B_i\}\subseteq\{1, \dots, p\}$ (note $i^*\in I_b$). Then $b\in \bigcap_{i\in I_b}B_i$, but $b\notin B_j$ for any $j\notin I_b$, otherwise $j$ would be an element of $I_b$. Hence $b\in \bigcap_{i\in I_b}B_i \setminus \bigcup_{i\notin I_b}B_i$. Also $a\in\bigcup_{i\in I_b} A_i$ since $a\in A_{i^*}$. Hence $(a,b)\in\bigcup_{I\subseteq\{1, \dots, p\}} \left(\bigcup_{i\in I}A_i \times \bigcap_{i\in I} B_i \setminus \bigcup_{i\notin I} B_i \right)$.

Let $(a,b)\in\bigcup_{I\subseteq\{1, \dots, p\}}\left(\bigcup_{i\in I}A_i \times \bigcap_{i\in I} B_i \setminus \bigcup_{i\notin I} B_i \right)$. Then there exists at least one $I^*\subseteq\{1, \dots, p\}$ such that $(a,b) \in \bigcup_{i\in I^*}A_i \times \bigcap_{i\in I^*} B_i \setminus \bigcup_{i\notin I^*} B_i$. Hence $a\in A_i$ for at least one $i\in I^*$ and $b\in B_i$ for all $i\in I^*$ and so there exists at least one $i\in I^*$ such that $(a,b)\in A_i \times B_i$ and so $(a,b)\in\bigcup_{i=1}^p (A_i\times B_i)$.

Finally, we show that $\widetilde{B}_I\cap\widetilde{B}_J=\emptyset$ if $I \neq J$.
To see this, note that if $I \neq J$, then we can without loss of generality assume that there is some index $k \in I$ that is not in $J$.  
Then any $x \in \widetilde{B}_I$ must be in $B_k$.  However, as $k \in J^C$, it follows that $x \in \cup_{i \notin J} B_k$ and hence that $x \notin B_J$.  This shows that the intersection is empty as claimed.  
\end{proof}

For future use, we note that an analogous argument to that given above can be used to show the following.
\begin{lemma}\label{lem:decomp2} 
Let $A_1, \ldots, A_p$, $B_1, \ldots , B_p$ be two collections of non-empty sets.  Then the finite union $\bigcup_{i=1}^p (A_i \times B_i)$ can be written as
$$\bigcup_{j=1}^q (\widetilde{A}_j \times \widetilde{B}_j),$$
where $\widetilde{A}_i \cap \widetilde{A}_j = \emptyset$ for $i \neq j$.
\end{lemma}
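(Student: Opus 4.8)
The statement to prove is Lemma~\ref{lem:decomp2}: a finite union $\bigcup_{i=1}^p (A_i \times B_i)$ can be rewritten as a finite union $\bigcup_{j=1}^q (\widetilde{A}_j \times \widetilde{B}_j)$ where the $\widetilde{A}_j$ are pairwise disjoint. This is the "dual" of Lemma~\ref{lm:decomp}, which produced disjoint second factors.

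The most economical approach: just apply Lemma~\ref{lm:decomp} with the roles of the two factors swapped. More precisely, consider the sets $B_i \times A_i$ and observe $\bigcup_{i=1}^p (B_i \times A_i)$ is the image of $\bigcup_{i=1}^p (A_i \times B_i)$ under the coordinate-swap bijection $(x,y)\mapsto(y,x)$. Applying Lemma~\ref{lm:decomp} to the collections $B_1,\dots,B_p$ and $A_1,\dots,A_p$ gives $\bigcup_{i=1}^p (B_i \times A_i) = \bigcup_{I} (\widetilde{B}_I \times \widetilde{A}_I)$ with the $\widetilde{A}_I$ (now playing the role of the "second factor") pairwise disjoint. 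Swapping coordinates back yields $\bigcup_{i=1}^p (A_i \times B_i) = \bigcup_{I} (\widetilde{A}_I \times \widetilde{B}_I)$ with the $\widetilde{A}_I$ pairwise disjoint, then relabel the index set $\{I : I \subseteq \{1,\dots,p\}\}$ as $\{1,\dots,q\}$ with $q = 2^p$ (discarding empty pieces if desired).

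Alternatively, I can redo the proof directly, which is arguably cleaner for the reader since it avoids the "swap" bookkeeping. Let me sketch that: for each $a$, set $J_a := \{i : a \in A_i\}$, define $\widetilde{A}_I = \bigcap_{i\in I} A_i \setminus \bigcup_{i\notin I} A_i$ and $\widetilde{B}_I = \bigcup_{i \in I} B_i$ over nonempty $I \subseteq \{1,\dots,p\}$; then the same two containments as in Lemma~\ref{lm:decomp} show equality, and the disjointness of the $\widetilde{A}_I$ follows exactly as the disjointness of the $\widetilde{B}_I$ did there (if $I \neq J$, pick $k \in I\setminus J$; any $x \in \widetilde{A}_I$ lies in $A_k$, but since $k \notin J$, $x \notin \widetilde{A}_J$).

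I do not anticipate a genuine obstacle here — the result is a symmetric restatement of the already-proved Lemma~\ref{lm:decomp}, and the cleanest writeup is simply to invoke it after swapping coordinates. The only minor care needed is the indexing convention: the author's target statement uses a linear index $j \in \{1,\dots,q\}$, so I would note that the power-set index $I$ can be enumerated as $I_1,\dots,I_q$ with $q = 2^p$ (or $q = 2^p - 1$ if we drop $I = \emptyset$), and the decomposition inherits finiteness. I would present it via the coordinate swap to keep the proof to two or three lines.

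\begin{proof}
This follows from Lemma~\ref{lm:decomp} by interchanging the roles of the two factors.  Let $\tau: Y \times X \to X \times Y$ denote the bijection $\tau(y, x) = (x, y)$.  Applying Lemma~\ref{lm:decomp} to the collections $B_1, \ldots, B_p$ and $A_1, \ldots, A_p$ yields
\[
\bigcup_{i=1}^p (B_i \times A_i) = \bigcup_{I \subseteq \{1, \ldots, p\}} (\widetilde{B}_I \times \widetilde{A}_I),
\]
where $\widetilde{B}_I = \bigcup_{i \in I} B_i$, $\widetilde{A}_I = \bigcap_{i \in I} A_i \setminus \bigcup_{i \notin I} A_i$, and $\widetilde{A}_I \cap \widetilde{A}_J = \emptyset$ for $I \neq J$.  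Applying $\tau$ to both sides gives
\[
\bigcup_{i=1}^p (A_i \times B_i) = \bigcup_{I \subseteq \{1, \ldots, p\}} (\widetilde{A}_I \times \widetilde{B}_I).
\]
Since there are finitely many subsets $I \subseteq \{1, \ldots, p\}$, we may enumerate those that yield a non-empty set $\widetilde{A}_I \times \widetilde{B}_I$ as $I_1, \ldots, I_q$ and set $\widetilde{A}_j := \widetilde{A}_{I_j}$, $\widetilde{B}_j := \widetilde{B}_{I_j}$.  Then $\bigcup_{i=1}^p (A_i \times B_i) = \bigcup_{j=1}^q (\widetilde{A}_j \times \widetilde{B}_j)$ with $\widetilde{A}_i \cap \widetilde{A}_j = \emptyset$ for $i \neq j$, as required.
\end{proof}
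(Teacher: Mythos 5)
Your proof is correct and is essentially the paper's approach: the paper simply remarks that "an analogous argument" to Lemma~\ref{lm:decomp} (with the roles of the two factors interchanged) establishes the result, which is exactly what your coordinate-swap reduction and your direct sketch both formalise. The handling of the indexing and the discarding of empty pieces is a harmless bookkeeping detail that the paper leaves implicit.
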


For the remainder of this section we consider a special form for the database sanitisation $X_\md(\omega)$. Suppose a family $\{X_d:\Omega\to U \mid d\in D\}$ of measurable mappings is given.  Define the mechanism $X_\md$ for $\md = (d_1, \ldots, d_n)$ by
\begin{equation}\label{eq:rvdef}
X_\md(\omega) = \left(X_\md^1(\omega), \dots, X_\md^n(\omega)\right),
\end{equation}
where the $X_\md^i$ are independent and $X_\md^i$ has the same distribution as $X_{d_i}$, for all $\md\in d^n, i\in\{1, \dots, n\}$.

We first note the following lemma concerning such mechanisms.
\begin{lemma}
\label{lem:ProdNec} 
Let a family $\{X_d: \Omega \to U \mid d \in D\}$ of measurable mappings be given and let $X_{\md}$ be defined by (\ref{eq:rvdef}).  If $X_{\md}$ is {\ab}-differentially private then  
\[
\mathbb{P}(X_d\in A) \le e^\e \mathbb{P}(X_{d^\prime}\in A)+\d,
\]
for all $d, d^\prime\in D, A \in \A_U$.
\end{lemma}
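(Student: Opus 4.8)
The plan is to recover the one-dimensional inequality by testing the $(\e,\d)$-differential privacy of the sanitisation $X_{\md}$ (with respect to the identity query, i.e.\ inequality (\ref{eq:DP}) for all $E\in\A_{U^n}$) on cylinder sets that constrain only a single coordinate. First I would dispose of the degenerate case: if $d=d'$ the claimed inequality is immediate, since $e^{\e}\ge 1$ and $\d\ge 0$ give $\mathbb{P}(X_d\in A)\le e^{\e}\mathbb{P}(X_d\in A)+\d$. So assume $d\ne d'$.

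Next, fix any element $d_0\in D$ (the set $D$ is non-empty, as it indexes the entries of our databases) and form the two databases $\md=(d,d_0,\dots,d_0)$ and $\md'=(d',d_0,\dots,d_0)$ in $D^n$. Since $d\ne d'$ and the two vectors agree in every other coordinate, $\md\sim\md'$, so the differential-privacy hypothesis on $X_{\md}$ applies to this pair. Given $A\in\A_U$, I would take the basic rectangle $E:=A\times U\times\cdots\times U$, which lies in the product $\sigma$-algebra $\A_{U^n}$, and apply (\ref{eq:DP}) with $Q_0$ the identity query: $\mathbb{P}(X_{\md}\in E)\le e^{\e}\mathbb{P}(X_{\md'}\in E)+\d$.

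It then remains to identify the two sides. By the definition (\ref{eq:rvdef}) of $X_{\md}$ and the fact that $X_{\md}^1$ has the same distribution as $X_d$,
\[
\mathbb{P}(X_{\md}\in E)=\mathbb{P}(X_{\md}^1\in A)=\mathbb{P}(X_d\in A),
\]
and likewise $\mathbb{P}(X_{\md'}\in E)=\mathbb{P}(X_{d'}\in A)$; note that only consistency of the first marginal is needed here, not independence of the coordinates. Substituting these into the inequality above yields exactly $\mathbb{P}(X_d\in A)\le e^{\e}\mathbb{P}(X_{d'}\in A)+\d$, as required.

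There is no real obstacle in this direction; the only points requiring care are that $D$ be non-empty (so the padding entries $d_0$ exist), that the case $d=d'$ be handled separately, and that $E$ genuinely belongs to $\A_{U^n}$. The substantive work — a converse showing that the one-dimensional condition is \emph{sufficient} for $(\e,\d)$-privacy of the product mechanism $X_{\md}$ — is where the decomposition results of Lemmas~\ref{lm:decomp} and~\ref{lem:decomp2} come into play, but that lies beyond the present statement.
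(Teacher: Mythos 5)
Your proposal is correct and follows essentially the same route as the paper: the cylinder set $E=A\times U\times\cdots\times U$ is exactly the preimage $\pi_1^{-1}(A)$ under the first-coordinate projection used in the paper's proof, and both arguments apply the differential-privacy hypothesis to a neighbouring pair differing only in the first entry and then identify the first marginal of $X_{\md}$ with the law of $X_d$. The minor additions (non-emptiness of $D$, the trivial case $d=d'$, the remark that independence is not needed) are harmless refinements of the same argument.
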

\begin{proof}
Let $d, d^\prime$ in $D$ be given.  If $d = d^\prime$, the result is trivial.  If $d \neq d^\prime$, take 
$\md = (d, d_2, \ldots, d_n)$, $\md^\prime = (d^\prime, d_2,\ldots, d_n)$ for any choice of $d_2, \ldots, d_n$ in $D$.  As $X_{\md}$ is {\ab}-differentially private and the projection $\pi_1:U^n \rightarrow U$ onto the first coordinate is measurable, it follows that for $A \in \A_U$:
\begin{eqnarray*}
\mathbb{P}(X_d \in A) &=& \mathbb{P}(\pi_1(X_{\md}) \in A) \\
&=& \mathbb{P}(X_{\md} \in \pi_1^{-1}(A)) \\
&\leq& e^{\e}\mathbb{P}(X_{\md^\prime} \in \pi_1^{-1}(A)) + \d \\
&=& e^{\e}\mathbb{P}(X_{d^\prime} \in A) + \d.
\end{eqnarray*}
\end{proof}
We next note that the converse of this result also holds. 
\begin{theorem}\label{th:alg}
Consider a family $\{X_d: \Omega \to U \mid d \in D\}$ of measurable mappings and assume that 
\[
\mathbb{P}(X_d\in A) \le e^\e \mathbb{P}(X_{d^\prime}\in A)+\d,
\]
for all $d, d^\prime\in D, A \in \A_U$.
Let $X_\md$ be as defined in (\ref{eq:rvdef}). Then,
\[
\mathbb{P}(X_\md \in A)\le e^\e \mathbb{P}(X_{\md^\prime}\in A)+\d,
\]
for all $\md\sim\md^\prime \in D^n$ and all $A \in \A_{U^n}$.
\end{theorem}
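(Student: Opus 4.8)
The plan is to adapt the monotone-class strategy already used for Theorem~\ref{thm:Suff}. Fix a neighbouring pair $\md\sim\md'\in D^n$ and let $\B$ be the collection of all $A\in\A_{U^n}$ for which $\Prob(X_\md\in A)\le e^\e\Prob(X_{\md'}\in A)+\d$. I would show (i) $\B$ contains the elementary subsets $\mathcal{R}$ of $U^n$ regarded as $U\times U^{n-1}$, and (ii) $\B$ is a monotone class; then Theorem~\ref{thm:Rudin} gives $\A_{U\times U^{n-1}}\subseteq\B$, and since the product $\sigma$-algebra of $n$ copies of $U$ coincides with $\A_{U\times U^{n-1}}$, this yields $\B=\A_{U^n}$, i.e. the inequality for all $A\in\A_{U^n}$. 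Running the same argument with $\md,\md'$ swapped (the hypothesis on $\{X_d\}$ is symmetric in $d,d'$) delivers the reverse inequality, so the proof is complete.

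First I would reduce to the case where $\md$ and $\md'$ differ only in the first coordinate. Since the components $X_\md^i$ are independent, permuting coordinates is a measurable isomorphism of $U^n$ that respects both the product $\sigma$-algebra and the law of the mechanism, so there is no loss of generality. Write $\md=(d,d_2,\dots,d_n)$, $\md'=(d',d_2,\dots,d_n)$ and decompose $X_\md=(Y,Z)$ with $Y=X_\md^1$ (distributed as $X_d$), $Z=(X_\md^2,\dots,X_\md^n)$, and $Y$ independent of $Z$; similarly $X_{\md'}=(Y',Z')$ with $Y'$ distributed as $X_{d'}$, $Y'$ independent of $Z'$, and $Z'$ having the same law as $Z$ (the tail entries $d_2,\dots,d_n$ agree). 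If $d=d'$ the statement is trivial, so assume $d\ne d'$.

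The crux is verifying $\mathcal{R}\subseteq\B$. Given an elementary set $R=\bigcup_{i=1}^p(A_i\times B_i)$ with $A_i\in\A_U$, $B_i\in\A_{U^{n-1}}$ (discard the empty pieces so that Lemma~\ref{lm:decomp} applies), rewrite $R=\bigcup_I(\widetilde A_I\times\widetilde B_I)$ using Lemma~\ref{lm:decomp}, where $\widetilde A_I\in\A_U$, $\widetilde B_I\in\A_{U^{n-1}}$, and crucially the tail sets $\widetilde B_I$ are pairwise disjoint. Because the $\widetilde B_I$ are disjoint, the events $\{Y\in\widetilde A_I,\ Z\in\widetilde B_I\}$ are pairwise disjoint, so independence of $Y$ and $Z$ gives
\[
\Prob(X_\md\in R)=\sum_I\Prob(Y\in\widetilde A_I)\,\Prob(Z\in\widetilde B_I).
\]
Applying the hypothesis $\Prob(Y\in\widetilde A_I)=\Prob(X_d\in\widetilde A_I)\le e^\e\Prob(X_{d'}\in\widetilde A_I)+\d$ termwise and recombining yields
\[
\Prob(X_\md\in R)\le e^\e\sum_I\Prob(Y'\in\widetilde A_I)\,\Prob(Z'\in\widetilde B_I)+\d\sum_I\Prob(Z\in\widetilde B_I).
\]
Here $\sum_I\Prob(Z\in\widetilde B_I)=\Prob\!\big(Z\in\bigcup_I\widetilde B_I\big)\le1$ precisely because the $\widetilde B_I$ are disjoint, and, reversing the disjointness/independence bookkeeping, the first sum equals $\Prob(X_{\md'}\in R)$. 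Hence $\Prob(X_\md\in R)\le e^\e\Prob(X_{\md'}\in R)+\d$, so $R\in\B$.

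Finally I would check that $\B$ is a monotone class by copying the argument of Theorem~\ref{thm:Suff}: for an increasing sequence $A_i\uparrow\bar A$ (or decreasing $A_i\downarrow\bar A$) with $A_i\in\B$, Proposition~\ref{pro:Measure} lets us pass the inequality to the limit, so $\bar A\in\B$. Combining with $\mathcal{R}\subseteq\B$ and Theorem~\ref{thm:Rudin} finishes the argument as outlined above. The main obstacle is the elementary-set step: one must split $R$ into pieces that are disjoint in the tail coordinates, which is exactly what Lemma~\ref{lm:decomp} provides, and this single fact is what simultaneously legitimises the independence factorisation and prevents the additive $\d$ term from accumulating across the (potentially many) pieces.
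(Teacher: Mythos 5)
Your proof is correct, and while its core step is the same as the paper's (decompose a finite union of measurable rectangles via Lemma~\ref{lm:decomp} into pieces with pairwise disjoint tail sets, factor by independence, apply the one-dimensional hypothesis to the first coordinate, bound the accumulated $\d$-terms by $\Prob\bigl(Z\in\bigcup_I\widetilde B_I\bigr)\le 1$, and then pass from elementary sets to $\A_{U^n}$ by the monotone class theorem, Theorem~\ref{thm:Rudin}), you handle the position of the differing coordinate differently. The paper proceeds by induction on $n$: when $\md$ and $\md'$ differ in the first coordinate it argues exactly as you do, but when they differ in a later coordinate it invokes a second decomposition (Lemma~\ref{lem:decomp2}, with disjoint \emph{first}-coordinate sets) together with the induction hypothesis applied to the $(n-1)$-dimensional tails. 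You instead reduce to the first-coordinate case by permuting coordinates, which eliminates both the induction and Lemma~\ref{lem:decomp2}; this is legitimate because the law of $X_{\md}$ under (\ref{eq:rvdef}) is the product of the laws of the $X_{d_i}$, so the pushforward of this law under a coordinate transposition is the law of the correspondingly permuted database, the transposition is a measurable bijection of $U^n$ preserving $\A_{U^n}$, and it maps neighbouring pairs (in Hamming distance) to neighbouring pairs. The trade-off: your route is shorter and needs only one decomposition lemma, but the exchangeability step deserves to be written out explicitly (pushforward of the product measure under the swap, checked on measurable rectangles) rather than asserted; the paper's induction avoids that bookkeeping at the cost of a second lemma and a two-case analysis. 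Your remarks about symmetry in $\md,\md'$ and about discarding empty rectangles before applying Lemma~\ref{lm:decomp} are appropriate and match what the paper implicitly does.
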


\begin{proof}
We shall use induction on $n$.  By assumption, the result is true for $n=1$.  Let $n > 1$ be given and assume that the result is true for all $k \leq n-1$.  

Assume that $\md$ and $\md^\prime$ differ in the first element so $d_1 \neq d_1'$ but $d_j = d_j'$ for $j \neq 1$.  Let 
\begin{equation}\label{eq:Rect}
R = \bigcup_{i=1}^p (A_i \times B_i),
\end{equation}
where $A_i \in \A_U$, $B_i \in \A_{U^{n-1}}$, be given.  It follows from Lemma \ref{lm:decomp} that we can write
\begin{equation}\label{eq:algdec1} 
R = \bigcup_{i=1}^q (\widetilde{A}_i \times \widetilde{B}_i),
\end{equation}
where $\widetilde{A}_i \in \A_U$, $\widetilde{B}_i \in \A_{U_{n-1}}$ for $1 \le i \le q$ and $\widetilde{B}_i \cap \widetilde{B}_j = \emptyset$ for $i \neq j$.  
Then, using the fact that the sets $\widetilde{B}_i$ are disjoint and the independence of the components of $X_{\md}$, $X_{\md'}$,
\begin{align*}
&\mathbb{P}(X_\md\in R)\\
&\quad= \sum_{i=1}^q \mathbb{P}(X_\md \in \widetilde{A}_i \times \widetilde{B}_i) \\
&\quad= \sum_{i=1}^q \mathbb{P}(X_{d_1}\in \widetilde{A}_i) \, \mathbb{P}(X_{(d_2, \dots, d_n)}\in \widetilde{B}_i)\\
&\quad\le \sum_{i=1}^q \left(e^\e \mathbb{P}(X_{d_1^\prime}\in \widetilde{A}_i)+\d \right) \mathbb{P}(X_{(d^\prime_2, \dots, d^\prime_n)}\in \widetilde{B}_i)\\
&\quad=  \, e^\e \sum_{i=1}^q \mathbb{P}(X_{\md^\prime}\in \widetilde{A}_i\times \widetilde{B}_i) + \d \, \mathbb{P}\left(X_{(d^\prime_2, \dots, d^\prime_n)}\in \bigcup_{i=1}^q \widetilde{B}_i\right) \\
&\quad\le  \, e^\e \mathbb{P}(X_{\md^\prime} \in R) + \d.
\end{align*}
If $d_1 = d_1'$, then $(d_2, \ldots, d_n) \sim (d_2', \ldots, d_n')$ and we can use Lemma \ref{lem:decomp2} and the induction hypothesis to conclude that
\begin{align*}
&\mathbb{P}(X_\md\in R)\\
&\quad\le \sum_{i=1}^q \mathbb{P}(X_{d_1^\prime}\in \widetilde{A}_i) \left( e^\e \mathbb{P}(X_{(d^\prime_2, \dots, d^\prime_n)}\in \widetilde{B}_i)+\d \right)\\
&\quad=  \, e^\e \sum_{i=1}^q \mathbb{P}(X_{\md^\prime}\in \widetilde{A}_i\times \widetilde{B}_i) + \d \, \mathbb{P}\left(X_{d_1^\prime}\in \bigcup_{i=1}^q \widetilde{A}_i\right) \\
&\quad\le  \, e^\e \mathbb{P}(X_{\md^\prime} \in R) + \d.
\end{align*}

Thus for any set $R$ of the form (\ref{eq:Rect}), we can conclude that 
\begin{equation}\label{eq:DPelem}
\mathbb{P}(X_{\md} \in R) \le e^\e\mathbb{P}(X_{\md'} \in R) + \d.
\end{equation}
In particular, (\ref{eq:DPelem}) holds for all elementary sets $R \in \A_{D^n}$.  A similar argument to that used in the proof of Theorem~\ref{thm:Suff} shows that the collection of all sets satisfying (\ref{eq:DPelem}) is a monotone class; furthermore this collection of subsets contains the elementary sets.  The result now follows from Theorem~\ref{thm:Rudin}.   
\end{proof}

\textbf{Remark:}  In the following subsection, we describe some simple applications of Theorem~\ref{th:alg}.  It is worth noting that it applies to any database space $D \subseteq U$, and to \emph{discrete} spaces in particular.  For mechanisms of the form (\ref{eq:rvdef}), it simplifies the task of testing the mechanism for differential privacy considerably.  For instance, if $D$ is a finite set with $|D|$ elements, then it is only necessary to check (\ref{eq:DP}) for all $\binom{|D|}{2}$ pairs of elements of $D$ and all $2^{|D|}$ subsets of $D$ to ensure differential privacy on $D^n$.  In general, we would have $n\binom{|D|}{2}|D|^{n-1}$ pairs of neighbouring elements and $2^{|D|^n}$ subsets to worry about!        
%\begin{example}
%Let's return to our earlier example of non-numeric data drawn from $D=\{Cool, Not Cool, Brilliant, Boring, So-so\}$.  Suppose we have a database $\md \in D^n$ and we wish to design a mechanism that is differentially private with respect to the Identity query, and hence with respect to any query.  Theorem \ref{th:alg} shows that in order to do this, it is sufficient to consider 1-dimensional mechanisms on the set $D$. For notational convenience, we use the abbreviations $D = \{C, NC, Br, Bo, S\}$ to represent the elements of $D$.  To specify a mechanism, we need to define a probability mass function $p_d$ (of the mapping $X_d$) on $D$ for each $d \in D$.  Writing $p_{ij} = \Prob(X_i = j)$ for $i, j \in D$, we can define such a mechanism as follows.
%$$p_{ij} = p \mbox{ for } i \neq j;$$
%$$p_{ii} = 1 - 4p.$$ 
%It is straightforward to verify that this mechanism will be differentially private provided $$\frac{1-\delta}{2 + e^{\e}} \leq p \leq \frac{e^{\e} + \delta}{2 + \delta}.$$
%\end{example}
%%%%%%%%%%%%%%%%%%%%%%%%%%%%%%%%%%%%%%%%%%%%%%%%%%%
\subsection{Examples}

\begin{example}\label{ex:Laplacian}
Inspired by a now standard approach to designing differentially private mechanisms, we first consider a sanitisation mechanism for real-valued databases in which Laplacian noise is added to each element of the database.  Recall that a Laplacian random variable $X:\Omega\to\mathbb{R}$ with mean zero and variance $2b^2$ has a probability density function (PDF) is given by
\[
f(x)=\frac{1}{2b}e^{-\frac{|x|}{b}}.
\]

Let $D\subset\mathbb{R}$ be bounded; for each $d \in D$, let $X_d(\omega) = d + L(\omega)$ where $L: \Omega \to \mathbb{R}$ is a Laplacian random variable with mean zero and variance $2b^2$ such that
\begin{align*}
b \ge \frac{\diam(D)}{\e-\log(1-\d)}.
\end{align*}
The resulting sanitised response mechanism corresponding to (\ref{eq:rvdef}) is {\ab}-differentially private for any database in $D^n$.

To see this, note that by Theorem~\ref{th:alg}, it is sufficient to show that 
\begin{align*}
\int_{A} \frac{e^{-\frac{|x-d|}{b}} }{2b}\dx &\le e^\e \int_{A} \frac{ e^{-\frac{|x-d^\prime|}{b}} }{2b}\dx   +\d,
\end{align*}
for all $d,d^\prime\in D, A \in \mathcal{B}(\mathbb{R})$, where $\mathcal{B}(\mathbb{R})$ denotes the Borel $\sigma$-algebra on the real line $\mathbb{R}$.  Using the triangle inequality, $|x-d^\prime| \le |x-d| + |\Delta|$ where $\Delta=d^\prime-d$, and so it is sufficient to show  that
\begin{align*}
\int_{A} \frac{e^{-\frac{|x-d|}{b}} }{2b}\dx &\le e^{\e- \frac{|d^\prime-d|}{b}}\int_{A} \frac{ e^{-\frac{|x-d|}{b}} }{2b}\dx +\d,
\end{align*}
for all $d^\prime,d\in D, A \in \mathcal{B}(\mathbb{R})$.  This last inequality will follow if $1 \le e^{\e- \frac{|\Delta|}{b}}+ \d$ or $b\ge \frac{|\Delta|}{\e-\log(1-\d)}$ for all $\Delta\in\{d^\prime-d:d^\prime, d\in D\}$.  

Of course, keeping in mind that the $l_1$ sensitivity of the identity query \cite{DworkSurvey} is precisely given by $\diam(D)$, this example can be seen as an application of the well-known Laplacian mechanism to the identity query. 
\end{example} 

\begin{example}\label{ex:SetsSan}
Consider again our earlier example where $D = 2^{\mathcal{H}}$ represents the sets of possible hobbies or interests of people.  As noted earlier, it is reasonable to assume that $D$ contains finitely many elements; we denote $|D| = m+1$.  Following Theorem \ref{th:alg} we will construct a mechanism for 1-dimensional databases: this can then be used to define a mechanism for databases in $D^n$ via (\ref{eq:rvdef}).  

For $d\in D$, consider the $D$-valued random variable $X_d$ with probability mass function:
$$\Prob(X_d=d) = 1-pm, \quad \Prob(X_d=d^\prime) = p,$$
where $d\ne d^\prime\in D$.  We make the reasonable assumption that $1-pm > p$.  

For {\ab}-differential privacy, we need the following to hold:
\begin{equation}\label{eq:SetsSan1}
\Prob(X_d\in A)\le e^\epsilon \Prob(X_{d^\prime}\in A)+\delta,
\end{equation}
where $A\subseteq D$ and $d, d^\prime\in D$.

We claim that (\ref{eq:SetsSan1}) will hold if and only if 
\begin{equation}\label{sing}
1-pm\le e^\epsilon p + \delta.
\end{equation}
This condition is clearly necessary as can be seen by considering the singleton set $A = \{d\}$.  To see that it is also sufficient let $d, d^\prime \in D$ and $A\subseteq D$ be given.  There are 4 cases to consider.

\begin{enumerate}
\item $d, d^\prime \notin A$: Then $\Prob(X_d\in A) = \Prob(X_{d^\prime}\in A) = p|A|$ and {\ab}-differential privacy holds trivially.
\item $d, d^\prime \in A$: Then $\Prob(X_d\in A) = \Prob(X_{d^\prime}\in A) = p(|A|-1)+1-pm=p(|A|-m-1)+1$ and {\ab}-differential privacy holds trivially.
\item $d^\prime \in A, d\notin A$: Then $\Prob(X_d\in A) \le \Prob(X_{d^\prime}\in A)$ and {\ab}-differential privacy holds trivially.
\item $d \in A, d^\prime\notin A$: Then \begin{align*}\Prob(X_d\in A) &= p(|A|-m-1)+1\\ \Prob(X_{d^\prime}\in A)&=p|A|.\end{align*}

From (\ref{sing}), we have
\begin{align*}
1-pm&\le e^\epsilon p+\delta\\
&= e^\epsilon (p|A|-p|A|+p)+\delta\\
&\le e^\epsilon p|A| - p(|A|-1)+\delta,
\end{align*}
since $|A|\ge1$ ($d\in A$ by hypothesis). Rearranging the above inequality, we see that 
$$p(|A|-m-1)+1 \le e^\epsilon \left(p|A|\right)+\delta.$$
\end{enumerate}
Thus we can construct an {\ab}-differentially private mechanism of the form (\ref{eq:rvdef}) by choosing $p \geq \frac{1-\delta}{m + e^{\epsilon}}$. 
\end{example}

%%%%%%%%%%%%%%%%%%%%%%%%%%%%%%%%%%%%%%%%%%%%%%%%%%%
\section{Accuracy}\label{sec:Acc}
In this section, we consider the question of accuracy for product sanitisations.  The literature on the interaction between privacy and accuracy is considerable and previous work has considered examples such as count queries \cite{DworkAnalysis}, contingency table marginals \cite{BCD07} and spatial data \cite{CPS12}.  As product sanitisations are constructed from 1-dimensional mechanisms, we focus on the error of these basic mechanisms here.  These results can then be used to derive bounds for data in $D^n$; the precise form these bounds will take depends on how the metric is constructed on $D^n$. 

We wish to emphasise two points about our work: we are considering a very general class of databases that can incorporate numerical, categorical and functional data; we consider {\ab}-differential privacy and are not assuming $\delta = 0$.

%%%%%%%%%%%%%%%%%%%%%%%%%%%%%%%%%%%%%%%%%%%%%%%%%%%
%We need to choose a metric on $D^n$ with which to measure error; for simplicity we choose the $l_1$-type metric defined by 
%$$\rho_n(\md, \md') = \rho(d_1, d_1') + \cdots \rho(d_n, d_n').$$
%Where clear from context, we shall drop the subscript $n$.  

As $\rho(\cdot, d)$ is a continuous function on $D$ for any fixed $d$, it is measurable with respect to the Borel $\sigma$-algebra on $D$.  It follows that $\rho(X_{d}, d)$ is a nonnegative-valued random variable.  We define the maximal expected error $\E$ as:
\begin{align}\label{eq:error}
\E = \max_{d \in D} \mathbb{E}\left[\rho(X_{d},d)\right].
\end{align}

For $r > 0$ and $x \in D$, $B_r(x)$ denotes the open ball $$B_r(x) := \{y \in D \mid \rho(y, x) < r\}.$$

We first note that for any differentially private mechanism with $\delta < 1$, $\E > 0$.  If $\d = 1$, then any mechanism is differentially private. 
\begin{lemma}
\label{lem:poserr} Let a family $\{X_d: \Omega \to U \mid d \in D\}$ of measurable mappings be given, let $0 \leq \d < 1$ and assume that 
\begin{equation}\label{eq:DP1d}
\mathbb{P}(X_d\in A) \le e^\e \mathbb{P}(X_{d^\prime}\in A)+\d,
\end{equation}
for all $d, d^\prime\in D, A \in \A_U$. Then $\E > 0$.  
\end{lemma}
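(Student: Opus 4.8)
The plan is to argue by contradiction: assume the family $\{X_d \mid d \in D\}$ satisfies (\ref{eq:DP1d}) with $0 \le \d < 1$ but $\E = 0$, and derive a violation of (\ref{eq:DP1d}). The first step is to unpack what $\E = 0$ means. Since $\rho$ is non-negative, $\mathbb{E}[\rho(X_d, d)] \ge 0$ for every $d \in D$, and by (\ref{eq:error}) we have $\E \ge \mathbb{E}[\rho(X_d, d)]$ for each such $d$; hence $\E = 0$ forces $\mathbb{E}[\rho(X_d, d)] = 0$ for every $d \in D$. A non-negative random variable with vanishing expectation is zero almost surely, so $\rho(X_d, d) = 0$ $\mathbb{P}$-a.s., i.e.\ $X_d = d$ with probability $1$, for each $d \in D$.

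Next I would exploit this on a single well-chosen test set. Assuming $D$ contains at least two points (the case $|D| = 1$ being degenerate, as there are then no neighbouring databases for the privacy condition to constrain), pick $d \ne d'$ in $D$ and take $A = \{d\}$. Because $U$ is a metric space, the singleton $\{d\}$ is closed and therefore lies in $\A_U$, so this is an admissible choice in (\ref{eq:DP1d}). By the previous paragraph $\mathbb{P}(X_d \in \{d\}) = 1$, while $\mathbb{P}(X_{d'} \in \{d\}) = \mathbb{P}(X_{d'} = d) = 0$ since $X_{d'} = d' \ne d$ almost surely. Substituting into (\ref{eq:DP1d}) gives $1 \le e^{\e}\cdot 0 + \d = \d$, which contradicts $\d < 1$. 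Hence $\E > 0$.

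The argument is short once the reduction $\E = 0 \iff X_d = d$ a.s.\ for all $d$ is in place, so I do not expect a genuine obstacle. The only points deserving a word of care are the measurability of singletons in $U$ (needed so that $A = \{d\}$ is a legitimate test set) and the standard fact that a non-negative random variable with zero mean vanishes almost surely; the former is immediate from singletons being closed in a metric space, the latter is elementary. The sole edge case is $|D| = 1$, for which the statement should be read as vacuous.
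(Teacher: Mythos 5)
Your proof is correct, but it takes a genuinely different route from the paper's. You argue by contradiction: $\E = 0$ forces $\mathbb{E}[\rho(X_d,d)] = 0$ for every $d$, hence $X_d = d$ almost surely, and then testing (\ref{eq:DP1d}) on the singleton $A = \{d\}$ (Borel, since singletons are closed in a metric space) with a pair $d \neq d'$ gives $1 \le \d$, a contradiction. The paper instead argues directly: it uses compactness to pick $u,v$ with $\rho(u,v) = \diam(D)$, sets $r = \diam(D)/2$, and splits on whether $\mathbb{P}(X_v \in B_r(v))$ exceeds $\d$ or not, in either case producing a positive expected error (in the second case even the quantitative bound $\mathbb{E}[\rho(X_v,v)] \ge r(1-\d)$). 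Your argument is more elementary and needs less: no compactness and no diameter-attaining pair, only that $D$ contains two distinct points — an assumption the paper's proof also needs implicitly, since with $\diam(D)=0$ the identity mechanism is {\ab}-differentially private with $\E = 0$, so you are right to flag $|D|=1$ (strictly, the lemma fails there rather than being vacuous, but this affects both proofs equally). What the paper's heavier setup buys is that its ball-and-diameter machinery is exactly the template reused for the quantitative lower bounds of Theorems \ref{th:AccuracyId} and \ref{pro:AccuracyId2}, whereas your contradiction argument establishes positivity of $\E$ but yields no quantitative information.
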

\begin{proof}
As $D$ is compact, we can choose $u, v$ in $D$ with $\rho(u, v) = \diam(D)$.  Let $r = \frac{\diam(D)}{2}$.  Then from (\ref{eq:DP1d}), it follows that
$$\mathbb{P}(X_u \in B_r(v)) \geq e^{-\e}\left( \mathbb{P}(X_v \in B_r(v)) - \d\right).$$
As $\rho(x, u) \geq r > 0$ for all $x \in B_r(v)$, it follows that $\mathbb{E}\left[\rho(X_u, u)\right] > 0 $ unless 
\begin{equation}
\label{eq:poserr1}\mathbb{P}(X_v \in B_r(v)) = \d.
\end{equation}
However, if this is the case then $\mathbb{P}(\rho(X_v, v) \geq r) = 1-\d >0$ and hence $\mathbb{E}\left[\rho(X_v, v)\right] \geq r(1-\d) > 0$.  This completes the proof.
\end{proof}

We now present two simple results giving lower bounds for $\E$.  The argument for the following result is inspired by that used to establish Theorem~3.3 of \cite{HarTal}.

\begin{theorem}\label{th:AccuracyId} Let a family $\{X_d: \Omega \to U \mid d \in D\}$ of measurable mappings satisfying (\ref{eq:DP1d}) be given. Then 
\begin{equation}\label{eq:accuracy1}
\E \geq (1-\d)\left(\frac{\diam(D)}{2(1+e^{\e})}\right).
\end{equation}
\end{theorem}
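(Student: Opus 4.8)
The plan is to pick two points $u, v \in D$ realising the diameter, $\rho(u,v) = \diam(D)$, and to show that at least one of the two random variables $X_u$, $X_v$ must place a substantial amount of its mass far from its "centre". Set $r = \diam(D)/2$, so that the open balls $B_r(u)$ and $B_r(v)$ are disjoint (any point within distance $r$ of both would contradict $\rho(u,v) = 2r$ via the triangle inequality). Consequently, for either index $w \in \{u, v\}$, we have $\mathbb{P}(X_w \in B_r(v)) + \mathbb{P}(X_w \in B_r(u)) \le 1$, and whichever ball $X_w$ avoids contributes to the error: if $\mathbb{P}(X_w \notin B_r(w)) = q$ then $\mathbb{E}[\rho(X_w, w)] \ge rq$, since $\rho(x,w) \ge r$ whenever $x \notin B_r(w)$.

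The core of the argument is to lower-bound $\mathbb{P}(X_u \notin B_r(u)) + \mathbb{P}(X_v \notin B_r(v))$, or rather to show that one of these cannot be too small. Apply (\ref{eq:DP1d}) with the set $A = B_r(v)$ and the pair $(d, d') = (u, v)$: this gives $\mathbb{P}(X_u \in B_r(v)) \ge e^{-\e}\big(\mathbb{P}(X_v \in B_r(v)) - \d\big)$. Write $\alpha = \mathbb{P}(X_v \in B_r(v))$. Then $X_u$ has mass at least $e^{-\e}(\alpha - \d)$ in $B_r(v)$, which is disjoint from $B_r(u)$, so $\mathbb{P}(X_u \notin B_r(u)) \ge e^{-\e}(\alpha - \d)$, and hence $\mathbb{E}[\rho(X_u, u)] \ge r e^{-\e}(\alpha - \d)$. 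On the other hand, $\mathbb{P}(X_v \notin B_r(v)) = 1 - \alpha$, giving $\mathbb{E}[\rho(X_v, v)] \ge r(1 - \alpha)$. Since $\E$ is the maximum over $d \in D$, it dominates both bounds, so $\E \ge r \max\big(e^{-\e}(\alpha - \d),\ 1 - \alpha\big)$.

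It remains to minimise the right-hand side over the adversarial choice $\alpha \in [0,1]$. The function $\alpha \mapsto e^{-\e}(\alpha - \d)$ is increasing and $\alpha \mapsto 1 - \alpha$ is decreasing, so the worst case occurs where they are equal: $e^{-\e}(\alpha - \d) = 1 - \alpha$, i.e. $\alpha(1 + e^{-\e}) = 1 + e^{-\e}\d$, giving the common value $1 - \alpha = (1 - \d)/(1 + e^{-\e}) = (1-\d)e^{\e}/(e^{\e}+1)$... here I should double-check the algebra to land exactly on $(1-\d)/\big(2(1+e^{\e})\big)$; a cleaner route is to use the symmetric version of (\ref{eq:DP1d}) (swapping $\md, \md'$ is permitted) and also bound $\mathbb{P}(X_v \in B_r(u))$ in terms of $\mathbb{P}(X_u \in B_r(u))$, then add the two resulting inequalities symmetrically, which introduces the extra factor of $2$ in the denominator. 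The main obstacle is precisely this last bookkeeping step: getting the constant right requires choosing which differential-privacy inequalities to apply (and in which direction) and summing them so that the factor $2(1+e^{\e})$ emerges rather than $1 + e^{-\e}$; once the right combination is identified the optimisation over $\alpha$ is a one-line calculus exercise. Note also that the conclusion is vacuous when $\d = 1$, consistent with the earlier remark, and Lemma~\ref{lem:poserr} already handles the fact that $\E > 0$ strictly when $\d < 1$.
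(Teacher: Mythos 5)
Your strategy is essentially the paper's: pick $u,v$ with $\rho(u,v)=\diam(D)$, work with the disjoint open balls of radius $r=\diam(D)/2$, lower-bound the expected error by $r$ times the mass lying outside the relevant ball (this is exactly the paper's Markov step, written in the other direction), and use (\ref{eq:DP1d}) to force $X_u$ to place mass $e^{-\e}(\alpha-\d)$ in $B_r(v)$, where $\alpha=\mathbb{P}(X_v\in B_r(v))$. The only place you stall is the closing algebra, and your diagnosis of what is missing there is mistaken. Solving $e^{-\e}(\alpha-\d)=1-\alpha$ gives $\alpha(1+e^{-\e})=1+e^{-\e}\d$, hence
\[
1-\alpha=\frac{(1+e^{-\e})-(1+e^{-\e}\d)}{1+e^{-\e}}=\frac{e^{-\e}(1-\d)}{1+e^{-\e}}=\frac{1-\d}{1+e^{\e}},
\]
not $(1-\d)/(1+e^{-\e})$: you dropped the factor $e^{-\e}$ in the numerator. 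With the correct value, $\E\ge r\cdot\frac{1-\d}{1+e^{\e}}=(1-\d)\frac{\diam(D)}{2(1+e^{\e})}$, exactly the claimed bound; the factor $2$ in the denominator comes from the radius $r=\diam(D)/2$, not from summing a second, ``symmetrised'' application of (\ref{eq:DP1d}), so the extra bookkeeping step you flag as the main obstacle is unnecessary and would not be the right fix. If you want to avoid locating the crossing point altogether, note that $e^{\e}\cdot e^{-\e}(\alpha-\d)+(1-\alpha)=1-\d$, so $(1+e^{\e})\max\bigl(e^{-\e}(\alpha-\d),\,1-\alpha\bigr)\ge 1-\d$ for every $\alpha$, which finishes in one line. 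Modulo this slip your argument is correct and is the same proof as the paper's, which substitutes the Markov bound at $v$ directly instead of optimising over the free parameter $\alpha$; your variant has the small advantage that, because your radius does not involve $\E$, you do not need the preliminary facts that $\E$ is finite and (via Lemma~\ref{lem:poserr}) strictly positive.
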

\begin{proof}
Without loss of generality, we may assume that $\E$ is finite.  Moreover, from Lemma \ref{lem:poserr} we know that $\E > 0$.  As $D$ is compact, there exist points $u, v$ in $D$ with $\rho(u, v) = \diam(D)$.  Set $t = \frac{\diam(D)}{2\E}$; then $t\E = \frac{\diam(D)}{2}$ and the balls $B_{t\E}(u)$, $B_{t\E}(v)$ are disjoint.  From Markov's inequality applied to the non-negative random variable $\rho(X_u, u)$, it follows that   
\begin{equation}
\label{eq:bd1u}
\mathbb{P}(X_u \in B_{t\E}(u)) \geq 1 - \frac{1}{t} = 1 - \frac{2\E}{\diam(D)}.
\end{equation}
It is now immediate that 
\begin{equation}
\label{eq:bd1uv}
\mathbb{P}(X_u \in B_{t\E}(v)) \leq \frac{2\E}{\diam(D)}.
\end{equation}
From (\ref{eq:DP1d}) we know that 
\begin{equation}
\label{eq:bd2uv} \mathbb{P}(X_u \in B_{t\E}(v)) \geq e^{-\e}(\mathbb{P}(X_v \in B_{t\E}(v)) - \d).
\end{equation}
Combining (\ref{eq:bd1uv}), (\ref{eq:bd2uv}) and noting that (\ref{eq:bd1u}) also holds with $u$ replaced by $v$, we see that
\begin{eqnarray*}
\frac{2\E}{\diam(D)} &\geq& e^{-\e}\left( 1 - \frac{2\E}{\diam(D)} - \d \right)
\end{eqnarray*} 
and after a simple rearrangement of terms we see that
\begin{eqnarray*}
\E \geq (1-\d) \left(\frac{\diam(D)}{2(1+e^{\e})}\right),
\end{eqnarray*}
as claimed.  
\end{proof}

The previous result applies to any compact metric space $D$.  Now assume that $D$ is discrete so that there exists some $\kappa > 0$ such that 
\begin{equation}\label{eq:kappa}
\rho(x, y) \geq \kappa \;\;\quad \forall x, y \in D.
\end{equation}
This combined with $D$ being compact immediately implies that $D$ is finite.  A straightforward alteration of the argument of Theorem \ref{th:AccuracyId} yields the following result.  
\begin{theorem}\label{pro:AccuracyId2} Let $D$ be finite with $|D| = m+1$ and $\kappa = \min_{d, d' \in D} \rho(d, d')$.  Let a family $\{X_d: \Omega \to U \mid d \in D\}$ of measurable mappings be given satisfying (\ref{eq:DP1d}).  Then 
\begin{equation}\label{eq:accuracy1}
\E \geq (1-\d) \left(\frac{\kappa m}{(m+e^{\e})}\right)
\end{equation}
\end{theorem}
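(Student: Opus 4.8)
The plan is to mimic the argument of Theorem~\ref{th:AccuracyId}, but to replace the generic "split into two disjoint balls of radius $\diam(D)/2$" construction with a sharper accounting that exploits discreteness. Since every two distinct points of $D$ are at distance at least $\kappa$, the ball $B_\kappa(d) = \{d\}$ for each $d$, and conversely $\rho(X_d, d) \geq \kappa$ whenever $X_d \neq d$. This means the error expectation can be bounded below cleanly: for any $d$,
\begin{equation*}
\mathbb{E}[\rho(X_d, d)] \geq \kappa\, \mathbb{P}(X_d \neq d) = \kappa\left(1 - \mathbb{P}(X_d = d)\right).
\end{equation*}
So a lower bound on $\E$ follows from an upper bound on $\min_{d} \mathbb{P}(X_d = d)$; equivalently, it suffices to show that some $d$ has $\mathbb{P}(X_d = d)$ bounded away from $1$ by the right amount, namely $\mathbb{P}(X_d = d) \leq \frac{e^\e + \d m}{m + e^\e}$, which would give $1 - \mathbb{P}(X_d = d) \geq \frac{m(1-\d)}{m + e^\e}$ and hence the claimed bound after multiplying by $\kappa$.

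The key step is therefore to derive that upper bound from \eqref{eq:DP1d}. First I would sum the differential privacy inequality over all $d' \in D$ with the test set $A = \{d\}$: for each fixed $d$ and each $d' \neq d$,
\begin{equation*}
\mathbb{P}(X_{d'} = d) \geq e^{-\e}\left(\mathbb{P}(X_d = d) - \d\right).
\end{equation*}
Now sum this over the $m$ values $d' \neq d$ and add $\mathbb{P}(X_d = d)$ itself; the left-hand side becomes $\sum_{d' \in D} \mathbb{P}(X_{d'} = d)$. The crucial observation — this is where finiteness is used a second time — is that the events $\{X_{d'} = d\}$ as $d$ ranges over $D$ partition (for each fixed $d'$) the sample space up to measure, so $\sum_{d \in D}\sum_{d' \in D}\mathbb{P}(X_{d'}=d) = \sum_{d'\in D} 1 = m+1$. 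Hence averaging over $d \in D$: there must exist some $d^\star$ with $\sum_{d' \in D}\mathbb{P}(X_{d'} = d^\star) \leq 1$. For that $d^\star$,
\begin{equation*}
1 \geq \mathbb{P}(X_{d^\star}=d^\star) + m\,e^{-\e}\left(\mathbb{P}(X_{d^\star}=d^\star) - \d\right),
\end{equation*}
and solving for $\mathbb{P}(X_{d^\star} = d^\star)$ gives exactly $\mathbb{P}(X_{d^\star}=d^\star) \leq \frac{e^\e + \d m}{m + e^\e}$, as required. Combining with the first display applied at $d = d^\star$ yields $\E \geq \mathbb{E}[\rho(X_{d^\star}, d^\star)] \geq \kappa(1 - \mathbb{P}(X_{d^\star}=d^\star)) \geq (1-\d)\frac{\kappa m}{m + e^\e}$.

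The main obstacle is the averaging step that produces the distinguished point $d^\star$: one must be careful that the double sum $\sum_{d,d'}\mathbb{P}(X_{d'}=d)$ really does equal $m+1$ and not merely $\leq m+1$, which is fine here since each $X_{d'}$ takes values in $D$ almost surely, so the column sums are exactly $1$. (If one only has $\leq m+1$, the same conclusion still follows.) A secondary point worth checking is the algebra in solving the final inequality — rearranging $1 \geq \mathbb{P}(1 + m e^{-\e}) - \d m e^{-\e}$ gives $\mathbb{P} \leq \frac{1 + \d m e^{-\e}}{1 + m e^{-\e}} = \frac{e^\e + \d m}{e^\e + m}$ after multiplying numerator and denominator by $e^\e$ — but this is routine. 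Everything else is a direct transcription of the Markov/DP bookkeeping from Theorem~\ref{th:AccuracyId}, specialized to the point-mass test sets that discreteness makes available.
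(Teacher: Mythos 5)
Your proof is correct and reaches exactly the paper's constant, but by a genuinely different route. The paper's proof transplants the packing argument of Theorem~\ref{th:AccuracyId}: taking $t\E=\kappa$, it applies Markov's inequality to $\rho(X_u,u)$ to get $\Prob(X_u\in B_{t\E}(u))\ge 1-\E/\kappa$, uses disjointness of the $m+1$ balls and pigeonhole to find a single $v\neq u$ with $\Prob(X_u\in B_{t\E}(v))\le \E/(\kappa m)$, and then applies (\ref{eq:DP1d}) on that one ball, obtaining an inequality with $\E$ on both sides which it rearranges. You instead test (\ref{eq:DP1d}) only on singletons, double-count the array $(\Prob(X_{d'}=d))_{d,d'\in D}$ to extract a column $d^\star$ of total mass at most $1$ (and, as you note, the unconditional inequality $\le m+1$ already suffices for this step), deduce the distributional bound $\Prob(X_{d^\star}=d^\star)\le(e^{\e}+\d m)/(m+e^{\e})$, and only then convert to an error bound via $\mathbb{E}[\rho(X_{d^\star},d^\star)]\ge\kappa\,\Prob(X_{d^\star}\neq d^\star)$. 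Your bookkeeping avoids the finiteness-of-$\E$ issue and the choice of the scaling parameter $t$, and the intermediate bound on $\Prob(X_d=d)$ is of independent interest: it is precisely the success probability achieved by the mechanism of Example~\ref{ex:SetsSan} with $p=(1-\d)/(m+e^{\e})$, so your argument also exhibits where the bound is tight. The paper's version, in exchange, is structured to run in parallel with the general compact case of Theorem~\ref{th:AccuracyId}.

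One caveat, which you share with the paper rather than introduce: the inequality $\mathbb{E}[\rho(X_d,d)]\ge\kappa\,\Prob(X_d\neq d)$ presumes that the sanitised value $X_d$ lies in $D$ almost surely, not merely in $U$ (exactly as the paper's step (\ref{eq:discbd1}) does, since its balls are by definition subsets of $D$). For genuinely $U$-valued mechanisms the stated constant can fail: if $U$ contains a point at distance $\kappa/2$ from every element of $D$ (a star-shaped metric), the constant mechanism outputting that point is differentially private with $\e=0$, $\d=0$, yet has $\E=\kappa/2<\kappa m/(m+1)$ once $m\ge 2$. So this tacit restriction is needed by any correct proof of the theorem as stated, and your explicit use of it is consistent with what the paper's own argument silently requires; it would be worth flagging it when writing the proof up.
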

\begin{proof}
It is trivial that the $m+1$ balls $B_{t\E}(u)$, $u \in D$ are all disjoint where $t = \frac{\kappa}{\E}$.  Fix some $u \in D$.  By the same reasoning as in the proof of Theorem \ref{th:AccuracyId},
\begin{equation}
\label{eq:discbd1}
\mathbb{P}(X_u \in B_{t\E}(u)) \geq 1 - \frac{\E}{\kappa}.
\end{equation}
Moreover, there must exist some $v \neq u$ such that 
\begin{equation}
\label{eq:discbd2} 
\mathbb{P}(X_u \in B_{t\E}(v)) \leq \frac{\E}{\kappa m}.
\end{equation}
Choose one such $v$ and apply (\ref{eq:DP1d}) to obtain
\begin{equation}
\label{eq:discbd3} 
\mathbb{P}(X_u \in B_{t\E}(v)) \geq e^{-\e}\left( \mathbb{P}(X_v \in B_{t\E}(v)) - \d \right).
\end{equation}
As in the proof of Theorem \ref{th:AccuracyId}, we can now conclude that 
\[\frac{\E}{\kappa m} \geq e^{-\e} \left( 1 - \frac{\E}{\kappa} - \d \right).\]
Rearranging this inequality gives us that 
\[\E \geq (1-\d)\left(\frac{\kappa m}{m + e^{\e}}\right).\]
\end{proof}

\begin{example}
Consider again Example \ref{ex:SetsSan}.  We have shown that there exists an {\ab}-differentially private mechanism with $p= \frac{1-\delta}{m + e^{\epsilon}}$ where $|D| = m+1$.  If $D$ is equipped with the discrete metric so that $\rho(d, d') = 1$ for all $d \neq d'$, then $\kappa =1$ and for any $d$, the expected value of $\rho(X_d, d)$ for this mechanism is given by
$$\sum_{d \neq d'} p = mp = (1-\d) \left(\frac{m}{m + e^{\e}}\right).$$
So the bound given by Theorem \ref{pro:AccuracyId2} is tight in this simple case.  
\end{example}

%%%%%%%%%%%%%%%%%%%%%%%%%%%%%%%%%%%%%%%%%%%%%%%%%%%
\section{Concluding Remarks}\label{sec:Conc}
We have considered differential privacy in the setting of probability on metric spaces with mechanisms viewed as measurable functions taking values in output query spaces.  We have demonstrated that this framework allows mechanisms based on sanitisation and output perturbation to be treated in a uniform manner; moreover we have presented examples to highlight that categorical, functional and numerical data can be treated in this setting.  For sanitisation mechanisms, a formal proof that differential privacy with respect to the identity query guarantees differential privacy with respect to any measurable query has been given.  We have also introduced the problem of determining sufficient sets for differential privacy, shown that a generating algebra of sets is a sufficient set and applied this fact to functional data in the space $C([0,1])$ of continuous functions on $[0,1]$.  In the latter half of the paper, we have focussed on product sanitisations of the form (\ref{eq:rvdef}); we have shown that these mechanisms are {\ab}-differentially private if and only if the 1-dimensional mechanism used to define them is.  This result was then applied in two contexts: to provide a condition for the well-known Laplacian mechanism to be differentially private; and to give a simple example of a differentially private mechanism for discrete categorical data.  Finally in Section \ref{sec:Acc}, two simple results giving lower bounds for the maximal expected error for {\ab} differentially private mechanisms on metric spaces were presented.  

%%%%%%%%%%%%%%%%%%%%%%%%%%%%%%%%%%%%%%%%%%%%%%%%%%%

\end{document}